\def\btog{1} 
    \newenvironment{teaserfigure}
        {\begin{figure*}}
        {\end{figure*}}
    \newtheorem{definition}{Definition}
    \newtheorem{lemma}{Lemma}
\theoremstyle:=definition,remark,plain\do{%
            \expandafter\g@addto@macro\csname th@\theoremstyle\endcsname{%
                \addtolength\thm@preskip\parskip
                }%
            }
\crefname{algocf}{alg.}{algs.} 
\setlist[description]{leftmargin=*}
\setlist[itemize]{leftmargin=*}
\setlist[enumerate]{leftmargin=*}
\newcommand{\descrip}{\textit} 
\newcommand*\autoop{\left(}
\newcommand*\autocp{\right)}
\newcommand*\autoob{\left[}
\newcommand*\autocb{\right]}
          \def\resetMathstrut@{%
           \setbox\z@\hbox{\the\textfont\symoperators\char40}%
           \ht\Mathstrutbox@\ht\z@ \dp\Mathstrutbox@\dp\z@}%
\DeclareMathOperator{\Dif}{\mathsf{D}}
\DeclareMathOperator{\cell}{cell}
\DeclareMathOperator{\cdist}{cdist}
\begin{document}

\newcommand{\R}{\mathbb{R}} 
\newcommand{\Z}{\mathbb{Z}} 
\newcommand{\N}{\mathbb{N}} 
\DeclarePairedDelimiter\abs{\lvert}{\rvert} 
\DeclarePairedDelimiter\norm{\lVert}{\rVert} 
\DeclarePairedDelimiter{\round}\lfloor\rceil 
\DeclarePairedDelimiter\ceil{\lceil}{\rceil} 
\DeclarePairedDelimiter\floor{\lfloor}{\rfloor} 
\DeclarePairedDelimiterX\set[1]\lbrace\rbrace{\setaux#1}
 \def\setaux#1|{#1\;\delimsize\vert\;}
\newcommand{\halfpi}{\frac{\pi}{2}}
\newcommand{\dg}{^\circ}

\newcommand\inv[1]{#1^{-1}}

\newcommand{\maxImgHeight}{0.955\textheight}

\title{Cell-Constrained Particles for Incompressible Fluids} 

\author{Zohar Levi}

\ifdefined\btog
\affiliation{%
}
\renewcommand\shortauthors{Zohar Levi}
\fi

\tableofcontents

\begin{abstract}
Incompressibility is a fundamental condition in most fluid models.
Accumulation of simulation errors violates it and causes volume loss.
Past work suggested correction methods to battle it.
These methods, however, are imperfect and in some cases inadequate.
We present a method for fluid simulation that strictly enforces incompressibility based on a grid-related definition of discrete incompressibility.

We formulate a linear programming (LP) problem that bounds the number of particles that end up in each grid cell. A variant of the band method is offered for acceleration, which requires special constraints to ensure volume preservation. Further acceleration is achieved by simplifying the problem and adding a special band correction step that is formulated as a minimum-cost flow problem (MCFP).
We also address coupling with solids in our framework and demonstrate advantages over prior work.
\end{abstract}

\begin{teaserfigure}
    \centering
    \includegraphics[width=.9\textwidth]{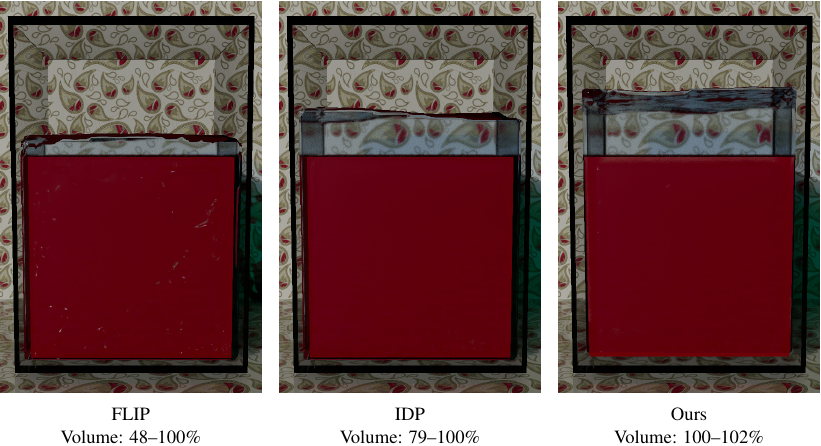}
    \caption{
    A large box is dropped into water.
    For FLIP \cite{zhu05} and IDP \cite{kugelstadt19}, particles on the bottom of the tank do not manage to clear path in time, and they are trapped inside the solid box, leading to significant volume loss.
    The volume range over all time steps is indicated below each image.
    }
    \label{fig.large_falling_obs_3d}
\end{teaserfigure}

\maketitle

\section{Introduction}
Fluids behave in a rich and complex way. Simulating them faithfully has been an active research in computer graphics to improve the realism of an animation of a large variety of materials.
Incompressibility is a fundamental condition in a fluid model, and it is expressed via a divergence-free constraint, which leads to volume conservation.
The constraint restricts instantaneous movements of particles, and it has no long term view of a fluid during a simulation.
No matter how accurate a simulation is, inaccuracies and numerical errors accumulate overtime and become pronounced, and nothing accounts for changes in the fluid's volume.

To address that, correction methods have been offered.
They range from improving particle spacing \cite{ando12} to a stricter density correction via an additional solution of a Poisson equation \cite{kugelstadt19}, or treating particles as volume parcels with prescribed volume \cite{qu22}.
While some of the methods are quite effective, they are still not perfect, and in certain cases they are inadequate.
We take a discrete approach to the problem, which restricts particles to grid cells, and show advantages over the state of the art.

We start by defining discrete incompressibility based on grid resolution (\cref{sec.discrete_inc}).
We offer a correction step to the PIC/FLIP framework \cite{bridson15fluid} that preserves this discrete incompressibility in each time step.
We limit movements of particles to specific locations in a local neighborhood of grid cells (\cref{sec.grid_move}). Incompressibility is expressed through discrete constraints that bound the number of particles that end up in each cell. These are added as hard constraints to an integer linear programming (ILP) problem, which we show can be relaxed to a linear programming (LP) problem to improve running time.
Nevertheless, the problem does not scale well, and running time can be an issue even for moderate size 3D grids.
We offer a variant of the band method \cite{ferstl16band}, tailored to our approach to preserve incompressibility (\cref{sec.band_method}).
This requires monitoring the amount of particles that go into and out of the band.
For that, we formulate an additional special band constraint to restrict particle movements near the band interface.
Two faster variants are offered.
Both first solve the LP as is (without the additional band constraint), which results in an easier problem that is faster to solve. This is followed by a second specialized step to correct the band interface and deeper.
One variant solve the LP again with a shorter one-way band constraint (\cref{sec.band_con2}).
The other variant formulates a minimum-cost flow problem (MCFP) and uses a fast algorithm based on Dijkstra's algorithm to solve it (\cref{sec.push_part}).
Overall, while solving the LP does not scale well, our fast variant of the band method has reasonable performance on moderate size grids.

Coupling with solids is a fundamental problem, and we address it within our framework that maintains incompressibility (\cref{sec.soilds}). In our evaluation, we devised scenarios (\cref{sec.scenes}) that illustrate the advantage of our method over the state of the art. One aspect that we show is that gradual correction over time, which previous works have in common, may not be adequate in certain scenarios:
There may not be an opportunity to correct the fluid after an obstacle moved, and guaranteeing incompressibility in each time step is therefore necessary.

\section{Related Work} \label{sec.related_work}
Fluid simulation was introduced to computer graphics by \citet{foster96}.
It was popularized by \citet{stam99}, who took the Eulerian view, using a grid to discretize the fluid. A semi-Lagrangian approach that simulates a particle movement was used in the advection step to ensure unconditional stability of the simulation.
\citet{kim07} simulate bubbles using the level set method. They track the volume change of each connected region and compensate for errors using divergence.

Smoothed-particle hydrodynamics (SPH) \cite{koschier22} takes the Lagrangian view, representing fluids with particles.
The approach was used in simple and intuitive methods such as \cite{muller03, macklin13}, reminiscent of the boids algorithm \cite{reynolds87}, where from local rules for a particle, based on its neighborhood, emerges a global behavior.
\citet{bender17} combine two pressure solvers, one enforcing a divergence-free
velocity field, the other satisfies a constant density condition.
\citet{band18} improve the implicit incompressible SPH (IISPH) \cite{ihmsen14} by realizing a consistent pressure gradient at boundary samples, using a different discretization of the pressure equation.

Hybrid schemes based on the particle-in-cell (PIC) and fluid-implicit-particle (FLIP) methods \cite{zhu05} use a dual view combining grid and particles for fluid representation.
The approach is similar to the material point method (MPM) \cite{jiang16}, which was used to simulate a larger variety of materials including elasto-plastic constitutive models.
Between the algorithms steps, data is transferred between the two representations, which causes a loss of information. A few methods aim at mitigating the loss. APIC \cite{jiang15apic} endows each particle with additional information in the form of a matrix, which allows it to preserve angular momentum.
PolyPIC \cite{fu17} improves the energy and vorticity conservation of APIC by considering more velocity modes.

\citet{ando12} detect and preserve thin fluid sheets, which are reconstructed using anisotropic kernels.
\citet{um14} use sub-grid particle correction for better particle distribution.
The band method \cite{ferstl16band} keep particles only within a narrow band of the liquid surface to improve performance.
\citet{sato18} extend the band method and add particle correction based on \cite{ando12} to better distribute particles near the surface.
\citet{takahashi19} simulate viscous materials based on APIC with strong two-way coupling with solids. They apply position correction based on density constraints, using a purely Lagrangian approach (SPH).

From the mass conservation law, \citet{kugelstadt19} derive a pressure Poisson equation which takes density deviation into account.
They add a density correction step that recovers fluid volume, which involves solving an additional Poisson equation.
Density correction was previously performed when using the so-called unilateral incompressibility constraint, which was used for free-flowing granular materials \cite{narain10} and animating splashing liquids \cite{gerszewski13}.

Power particles \cite{degoes15power} considers particles as having volume, and the fluid domain is partitioned as a power diagram.
The particle volumes can be prescribed, which enables controlling the fluid's volume and leads to better particle distribution.
Power PIC \cite{qu22} improves the performance of \cite{degoes15power} by reformulating the problem as a transportation problem, which is solved efficiently using Sinkhorn's iterative algorithm.
In a 2-phase simulation, an estimate of the surface is used as an air occupancy baseline for slack air variables. The variables fill the gaps between the prescribed volume of the fluid particles and the volume of the cells.

\citet{elcott07} rewrite the Euler equations in terms of vorticity instead of pressure. They use discrete exterior calculus (DEC) for discretization, which is readily applied to meshes.
DEC theory guarantees that certain properties hold, which arbitrary discretizations \cite{ando13} cannot.
\citet{ando15} apply the same approach to a regular grid, which reduces DEC to finite difference operators. It coincides with the MAC grid discretization \cite{bridson15fluid}, which is justified by DEC theory that associates k-forms with specific elements of the grid.
The equations are derived from an energy, using a variational approach \cite{batty07}.
\citet{dewitt12} represent vorticity and velocity using a basis of Laplacian eigenfunctions, which admits closed-form solutions on simple domains.
Since Laplacian eigenfunctions correspond with spatial scales of vorticity, basis coefficients can be seen as a discrete spectrum of vorticity.

\section{Background} \label{sec.background}
We model a fluid in a domain $\Omega \subset \R^d$, $d \in \{2, 3\}$, using the Navier--Stokes equations, which can be written for inviscid, incompressible flows as:
\begin{subequations}
\begin{align}
\rho \frac{\Dif u}{\Dif t} &= -\nabla p + f \\
\nabla \cdot u &= 0 \quad \label{eq.div_free} ,
\end{align}
\end{subequations}
where $u, f \in \R^d$ and $p, \rho \in \R$ denote velocity, external forces, pressure, and density.
$\frac{\Dif}{\Dif t}$ denotes the material derivative.
\Cref{eq.div_free} enforces the velocity to be divergence-free, which ensures incompressibility.

The FLIP method is a hybrid discretization method that combines the Eulerian and Lagrangian views.
The domain is discretized using a regular (square or cubic) grid, and the fluid is discretized using particles.
The FLIP algorithm alternates between the views, solving pressure over the grid, which is more accurate, and advecting quantities using particles, which is more robust.
Discretization inaccuracies accumulate over time and violate incompressibility, and a correction step is needed to maintain it.
The algorithm is listed in \cref{alg.flip} for a single time step.
\begin{algorithm}[t]
    \DontPrintSemicolon

    Transfer velocity from particles to grid \tcp*[h]{ particles are at $\bar{x}$ } \;
    Apply external forces to grid \;
    Solve for pressure \;
    Transfer velocity from grid to particles \;
    Advect particles \tcp*[h]{ new positions are at $\hat{x}$ } \;
    Correct particle positions \tcp*[h]{ new positions are at $x$ } \;

    \caption{A time step of the FLIP method} \label{alg.flip}
\end{algorithm}
The optional correction step is missing in the original FLIP method, and it is implemented differently by IDP and our method.

\section{Discrete Incompressibility} \label{sec.discrete_inc}
We propose a definition for discrete incompressibility based on grid resolution.
We define discrete density as the number of particles in a grid cell, and we denote its units by ppc, which stands for particles per cell.
We initially propose the following simple condition for discrete incompressibility, which we will relax in \cref{sec.surface_bubble}:
keep a constant number $\mu \in \Z$ of particles in a fluid cell throughout the simulation.
$\mu$ is given, and it is usually based on the initial fluid state.

Preserving the condition is done in the correction step that is described in the following, starting with notations.
Let $\mathcal{C}$ be the set of grid cells that cover $\Omega$.
We associate markings with grid cells, which describe their characteristics. Each marking has its subset of cells that are marked with it.
Initially, we use the disjoint subsets $\mathcal{C}_\text{empty}$, $\mathcal{C}_\text{solid}$, and $\mathcal{C}_\text{fluid}$ to mark cells that are empty, part of a solid, or contain fluid.
For each cell $c$, we define a set $\gamma_c$ of the indices of the particles that are in the cell.

Given a set of $n$ particles and their positions $\hat{x} \in \Omega^n$ after advection, we would like to solve for new particle positions $x \in \Omega^n$ that are close to $\hat{x}$ but preserve incompressibility.
We will refer to $\hat{x}$ as ideal positions.
We define a cost function for closeness that penalizes the distance between two points $q, r \in \Omega$:
\[ \sigma_\text{obj}(q, r) \coloneq \norm{ q - r }_2^2 \ . \]
This leads to the following problem:
\begin{mini!}
{ x } { \sum_{j=1}^{n} \sigma_\text{obj}( x_j, \hat{x}_j ) \label{eq.problem.a} } {\label{eq.problem}} {}
\addConstraint{ \abs{\gamma_c} }{ = \mu \ , \quad  \forall c \in \mathcal{C}_\text{fluid} } \label{eq.problem.b}
\addConstraint{ \abs{\gamma_c} }{ \le 0 \ , \quad  \forall c \in \mathcal{C}_\text{solid} } \label{eq.problem.c} \ ,
\end{mini!}
where $x_j, \hat{x}_j \in \Omega$ are the new and ideal positions of the $j$th particle, and $\abs{\gamma_c}$ denotes the number of particles in cell $c$.
Cell markings $\mathcal{C}_\text{empty}$, $\mathcal{C}_\text{fluid}$ and the sets $\gamma$ are determined by $x$.
That is, $\mathcal{C}_\text{solid}$ is determined by objects in the scene, and \cref{eq.problem.c} ensures that it does not contain particles. $\mathcal{C}_\text{empty}$, on the other hand, is determined by particle positions, and it does not require a constraint.

\subsection{Surface and Bubbles} \label{sec.surface_bubble}
\begin{figure}
    \centering
    \includegraphics[width=1\columnwidth]{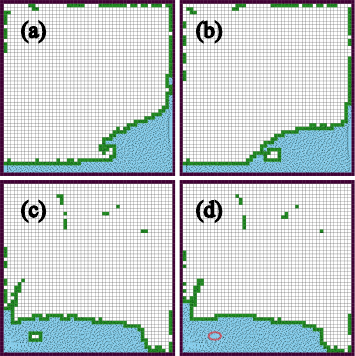}
    \caption{
    The surface (green cells) of a breaking wave (a) closes upon itself, creating an air pocket (b).
    The air pocket shrinks (c) until surface cells do not have empty neighbors anymore and become inner cells (d). These former surface cells may have less than $\mu$ particles, and in such a case we say that they contain air bubbles (an example is circled in red).
    }
    \label{fig.bubble}
\end{figure}
The discrete incompressibility condition in the previous section is too restrictive, and we relax it in a reasonable way.

We will use two types of neighborhoods for a grid cell.
The first is a von Neumann neighborhood, which refers to 4-connectivity in 2D and 8-connectivity in 3D (axis-aligned directions).
The second is a Moore neighborhood, which refers to 8-connectivity in 2D and 26-connectivity in 3D (two grid cells are neighbors if the Chebyshev distance between their centers is 1).

\begin{definition}[surface] \label{def.surface}
Given cell markings, we define a fluid cell as surface if it neighbors in a Moore neighborhood a cell that is not a fluid cell and is not on the domain boundary.
We partition $\mathcal{C}_\text{fluid}$ into surface cells $\mathcal{C}_\text{surface}$ and inner cells $\mathcal{C}_\text{inner}$.
\end{definition}

In the definition, we used only the domain boundary, distinguishing between solid cells that are static (tank walls) and solid cells that may move (\cref{sec.soilds}).

We start with motivation.
Consider a common setup where the fluid initially occupies a rectangular shape of 2D grid cells with density $\mu$ ppc. The incompressibility conditions from the previous sections means that the fluid will flow and change in full cells only, i.e., a cell from the surface with four particles will become empty, and an empty cell near the surface will gain four particles. This rigidity will cause particles to lose their resolution and behave as a unit or a single particle within a cell.
To allow shape flexibility and individual particle movement between cells, we permit surface cells to have $\le \mu$ particles.
In the next section, we will extend the relaxation on surface cells to another layer of cells incident to the surface to make the problem easier.

\begin{definition}[bubble]
An inner cell $c \in \mathcal{C}_\text{inner}$ is said to contain an air bubble if it is not empty and the number of its particles is less than $\mu$.
\end{definition}

In certain cases, such as a breaking wave, the surface curls and folds upon itself. This leads to a moment in time, where a surface cell suddenly becomes an inner cell since it no longer has any incident empty cells.
This former surface cell may be only partially filled (has less than $\mu$ particles), which would violate the incompressibility condition.

To address this, we allow inner fluid cells to keep air bubbles. Specifically, instead of demanding from inner cells to have at least $\mu$ ppc, we demand that they will not lose particles (since becoming an inner cell) and will have at least as many particles as they had in the previous iteration.
See \cref{fig.bubble} for an illustration.

We apply the relaxed conditions to \cref{eq.problem}:
\begin{mini!}
{ x } { \sum_{j=1}^{n} \sigma_\text{obj}( x_j, \hat{x}_j ) \label{eq.problem2.a} } {\label{eq.problem2}} {}
\addConstraint{ \abs{\gamma_c} }{ \le \mu \ , \quad \forall c \in \mathcal{C}_\text{surface} } \label{eq.problem2.b}
\addConstraint{ \abs{\bar{\gamma}_c} \le \abs{\gamma_c} }{ \le \mu \ , \quad \forall c \in \mathcal{C}_\text{inner} } \label{eq.problem2.c}
\addConstraint{ \abs{\gamma_c} }{ \le 0 \ , \quad  \forall c \in \mathcal{C}_\text{solid} } \label{eq.problem2.d} \ ,
\end{mini!}
where $\bar{\gamma}$ refers to $\gamma$ from the previous iteration.
Note that while the relaxed constraints allow reasonable expansion, they do not allow compression: there can be at most $\mu$ ppc in a cell.

\section{Grid Movement} \label{sec.grid_move}
The problem in \cref{eq.problem2} is hard. To make it more manageable, we reformulate it in terms of grid movement.

We start by limiting the size of the simulation time step such that every particle does not move more than one cell (i.e., to the local Moore neighborhood).
$\hat{x}$ will refer to particle positions after advection with the updated time step.

We reduce the possible grid movements of a particle to a set of grid directions in a von Neumann neighborhood. Namely, define the set $\mathcal{D}$ that consists of the columns of the $d \times d$ identity matrix and their negation, along with the zero vector that signifies staying in the same cell.
For example, in 2D:
\begin{equation} \label{eq.directions}
\mathcal{D} \coloneq \left\{ \mathcal{D}_i \right\}_{i=1}^5
= \left\{
\begin{pmatrix*}[r] 1\\ 0 \end{pmatrix*},
\begin{pmatrix*}[r] 0\\ 1 \end{pmatrix*},
\begin{pmatrix*}[r] -1\\ 0 \end{pmatrix*},
\begin{pmatrix*}[r] 0\\ -1 \end{pmatrix*},
\begin{pmatrix*}[r] 0\\ 0 \end{pmatrix*}
\right\} \ .
\end{equation}
Let $m \coloneq \abs{ \mathcal{D} }$ (m=5 in 2D, and m=7 in 3D).
Experimentally, we found that using a Moore neighborhood (9 directions in 2D and 27 directions in 3D) did not make a significant difference.

Instead of solving for particle positions, we solve for particle grid movements. For each particle, we choose one direction from $\mathcal{D}$.
Let $b \in \Z_2^{ m \times n }$ be a binary matrix that chooses a direction for each particle.
The $j$th column is assigned to the $j$th particle, and it contains a single nonzero in the entry that corresponds to the particle's chosen direction.

Let $\phi_\text{center}(c) \in \Omega$ return the center of a cell $c$.
Let $\phi_\text{close}(q, c) \in \Omega$ return the point in a cell $c$ that is closest to a point $q \in \Omega$.
That is, the $k$th component of $\phi_\text{close}(q, c)$ is
\begin{equation} \label{eq.best_cell_pos}
( \phi_\text{close}(q, c) )_k \coloneq \begin{cases}
    q_k & \text{if } \abs{v_k} < 0.5 \\
    \lceil ( \phi_\text{center}(c) )_k \rceil - \epsilon & \text{if } v_k \ge 0.5 \\
    \lfloor ( \phi_\text{center}(c) )_k \rfloor + \epsilon & \text{if } v_k \le -0.5
    \end{cases} \ ,
\end{equation}
where $v \coloneq q - \phi_\text{center}(c)$, and $\epsilon$ (=0.01) is a small margin.

Let $\bar{x} \in \Omega^n$ be the particle positions at the end of the previous iteration.
The matrix entry $b_{ij}$ corresponds to a possible grid movement $\mathcal{D}_i \in \R^d$ for the $j$th particle. We associate a specific particle position with this entry, which is the optimal position in the cell that the particle will end up in w.r.t. its ideal position and $\sigma_{obj}$:
\[ \xi_{ij} \coloneq \phi_\text{close}( \hat{x}_j, cell(\bar{x}_j + \mathcal{D}_i) ) \ , \]
where $\bar{x}_j$ denotes the position of the $j$th particle in the previous iteration, and $\cell( \cdot )$ returns the grid cell that contains a given point in the domain.
Note that all positions $\xi \in \Omega^{ m \times n }$ are known.
See \cref{fig.xi} for illustration.
\begin{figure}
    \centering
    \includegraphics[width=.5\columnwidth]{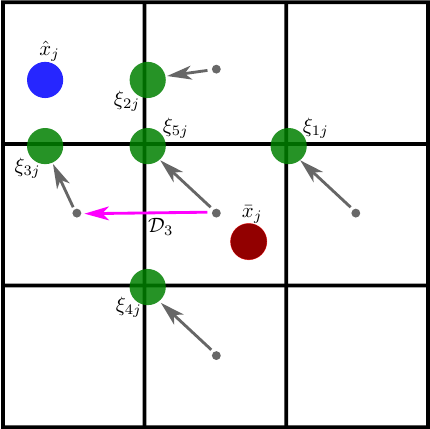}
    \caption{
    Possible positions $\xi_{ij}$ (in green) of the $j$th particle.
    A gray arrow points from the center of the cell that the particle is confined to to the closest location in that cell (up to a margin $\epsilon$) to $\hat{x}_j$ (in blue).
    If $\hat{x}_j$ was in one of the five possible cells, then the corresponding $\xi_{ij}$ to that cell would have coincided with it.
    All the possible positions $\xi_{ij}$ are known, and a solution to \cref{eq.problem3} selects one as the position $x_j$ of the particle at the end of the iteration.
    There are five cells corresponding to $\mathcal{D}$.
    One example of a discrete direction is give by the magenta arrow, which points from the center of the cell that contains $\bar{x}_j$ (in red) to the center of the neighboring cell in the discrete direction $\mathcal{D}_3$.
    }
    \label{fig.xi}
\end{figure}

For each cell $c$, we define a set $\tilde{\gamma}_c$ of index pairs for the particles that may end up in it. Each index pair is associated with a particle movement possibility, and it consists of an index of a direction and an index of a particle:
\[ \tilde{\gamma}_c \coloneq \left\{ (i, j) \mid \cell( \xi_{ij} ) = c \right\} \]

The problem becomes:
%
\begin{mini!}
{ b } { \sum_{i = 1}^m \sum_{j=1}^n b_{ij} \sigma_\text{obj}( \xi_{ij}, \hat{x}_j ) \protect\label{eq.problem3.a} }{ \protect\label{eq.problem3} }{}
\addConstraint{ 0 \le b_{ij} }{ \le 1 \ , \quad }{ \forall b_{ij} \in b  \protect\label{eq.problem3.b} }
\addConstraint{ \sum_{i = 1}^{m} b_{ij} }{ = 1 \ , \quad }{ j = 1 \dots n  \protect\label{eq.problem3.c} }
\addConstraint{ \sum_{ (i, j) \in \tilde{\gamma}_c } b_{ij} }{ \le \mu \ , \quad }{ \forall c \in \bar{\mathcal{C}}_\text{empty} \cup \bar{\mathcal{C}}_\text{surface} \protect\label{eq.problem3.d} }
\addConstraint{ \abs{\bar{\gamma}_c} \le \sum_{ (i, j) \in \tilde{\gamma}_c } b_{ij} }{ \le \mu \ , \quad }{ \forall c \in \bar{\mathcal{C}}_\text{inner}  \protect\label{eq.problem3.e} }
\addConstraint{ \sum_{ (i, j) \in \tilde{\gamma}_c } b_{ij} }{ \le 0 \ , \quad }{ \forall c \in \bar{\mathcal{C}}_\text{solid} \ . \protect\label{eq.problem3.f} }
\end{mini!}
Details:
\begin{itemize}
\item The objective in \cref{eq.problem3.a} is similar to \cref{eq.problem2.a}.
All $\sigma_\text{obj}( \xi_{ij}, \hat{x}_j )$ are known, and $b$ ensures that only selected particle movements contribute to the sum.

\item \Cref{eq.problem3.b} asserts the range of binary variables.

\item \Cref{eq.problem3.c} forces a single selected direction for each particle.

\item \Cref{eq.problem3.d} and \cref{eq.problem3.e} are similar to the incompressibility constraints \crefrange{eq.problem2.b}{eq.problem2.c}.
The sum $\sum_{ (i, j) \in \tilde{\gamma}_c } b_{ij}$ counts the particles that end up in cell $c$.
$\bar{\mathcal{C}}$ refers to the markings in the previous iteration.

\item \Cref{eq.problem3.f} is similar to \cref{eq.problem2.d}
\end{itemize}
In \cref{eq.problem3.d}, we use the surface marking from the previous iteration to relieve the need to track the surface during optimization (or formulate a constraint that handles the two cases of a surface cell remains a surface or becomes an inner cell).
This extends the relaxed condition on the surface from \cref{sec.surface_bubble} to another layer of cells incident to the surface (the condition now applies to surface cells in the previous iteration, which may belong to the layer of inner cells incident to the surface in this iteration), which is still within reason.

The problem is always feasible since $\bar{x}$ is in the solution space.
Given a solution $b^*$, the final particle position $x_j$ is set to the $\xi_{ij}$ that corresponds to its selected movement direction indicated by the $i$th entry with the single nonzero in the $j$th column of the solution $b^*$.

The problem in \cref{eq.problem3} is a linear programming problem with binary variables $b$ only (the rest of the symbols are fixed, including cell markings, index sets, and particle positions, which do not depend on $b$), which is a type of integer linear programming (ILP).
Satisfying a 0-1 ILP is one of Karp's 21 NP-complete problems.
The following proposition allows us to relax the problem to a standard linear programming (LP) with continuous variables $b \in \R^{ m \times n }$, for which there are polynomial-time solvers, and it is faster to solve in general.
\begin{restatable}{proposition}{LPrelax} \label{thm.LPrelax}
The LP relaxation of the ILP in \cref{eq.problem3}, which uses continuous variables, has the same optimal solution.
\end{restatable}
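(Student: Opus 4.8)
The plan is to prove that the constraint matrix of the LP in \cref{eq.problem3} is totally unimodular (TU), so that every vertex of its feasible polytope has integer coordinates; the proposition then follows because the linear objective is minimized at such a vertex. First I would note that the feasible region is a bounded polytope, since every $b_{ij}$ is confined to $[0,1]$ by \cref{eq.problem3.b}, and that it is nonempty because $\bar{x}$ is feasible. Hence a minimizer of \cref{eq.problem3.a} is attained at a vertex, and it suffices to show all vertices are integral: combined with the box bounds this forces each $b_{ij} \in \{0,1\}$, so the vertex is an ILP-feasible point. As the ILP feasible set is contained in the LP feasible set and the two share the same objective, their optimal values, and an optimal solution, then coincide.

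Next I would expose the combinatorial structure behind \crefrange{eq.problem3.c}{eq.problem3.f}. Each variable $b_{ij}$ encodes the event ``particle $j$ moves to cell $\cell(\xi_{ij})$,'' and because the displacements in $\mathcal{D}$ carry $\bar{x}_j$ into distinct cells, the variables belonging to a fixed particle $j$ point to distinct destination cells. Consequently the coefficient matrix $A$ of \crefrange{eq.problem3.c}{eq.problem3.f} is exactly the node--edge incidence matrix of the bipartite graph whose two sides are the particles and the grid cells, with one edge per variable joining particle $j$ to cell $\cell(\xi_{ij})$: the column of $b_{ij}$ carries a single $1$ in the particle-$j$ row (from \cref{eq.problem3.c}) and a single $1$ in the row of cell $\cell(\xi_{ij})$ (from its capacity constraint), and zeros elsewhere. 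The two-sided inner-cell constraint \cref{eq.problem3.e} and the solid constraint \cref{eq.problem3.f} reuse the same matrix row with integer bounds, so they do not perturb this incidence pattern.

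I would then invoke the classical fact that the incidence matrix of a bipartite graph is totally unimodular; equivalently, the Heller--Tompkins sufficient condition applies, since the rows split into the two color classes ``particles'' and ``cells'' and every column has its two unit entries in different classes. Appending the box constraints \cref{eq.problem3.b} only adjoins rows of the signed identity, which preserves total unimodularity. By the Hoffman--Kruskal theorem, a TU system with integer right-hand sides, and here the bounds $\mu$, $\abs{\bar{\gamma}_c}$, $0$ and $1$ are all integers, has only integral vertices, which completes the argument.

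The hard part is bookkeeping rather than conceptual: I must confirm that every destination cell $\cell(\xi_{ij})$ indeed carries one of the capacity rows \crefrange{eq.problem3.d}{eq.problem3.f}, which holds because $\bar{\mathcal{C}}_\text{empty}$, $\bar{\mathcal{C}}_\text{surface}$, $\bar{\mathcal{C}}_\text{inner}$ and $\bar{\mathcal{C}}_\text{solid}$ partition all cells, so that each column has exactly two unit entries and the clean bipartite incidence structure is preserved. Once this is verified, recognizing \cref{eq.problem3} as a capacitated transportation (min-cost flow) instance makes the total unimodularity, and hence the integrality, immediate.
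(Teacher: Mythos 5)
Your proposal is correct and follows essentially the same route as the paper: both reduce the claim to total unimodularity of the constraint matrix (each column having one unit entry in the particle-assignment row from \cref{eq.problem3.c} and one in the cell-capacity row from \crefrange{eq.problem3.d}{eq.problem3.f}), combined with integrality of the right-hand sides. The only difference is that you recognize the matrix as a bipartite node--edge incidence matrix and cite the Heller--Tompkins/Hoffman--Kruskal results, whereas the paper proves total unimodularity from scratch by induction on submatrix size --- its final case, exhibiting the null vector $(\mathbb{1}, -\mathbb{1})$, is precisely the textbook argument for the theorem you invoke.
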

See proof in \cref{sec.proofs}.

\section{The Band Method} \label{sec.band_method}
\begin{figure}
    \centering
    \includegraphics[width=.5\columnwidth]{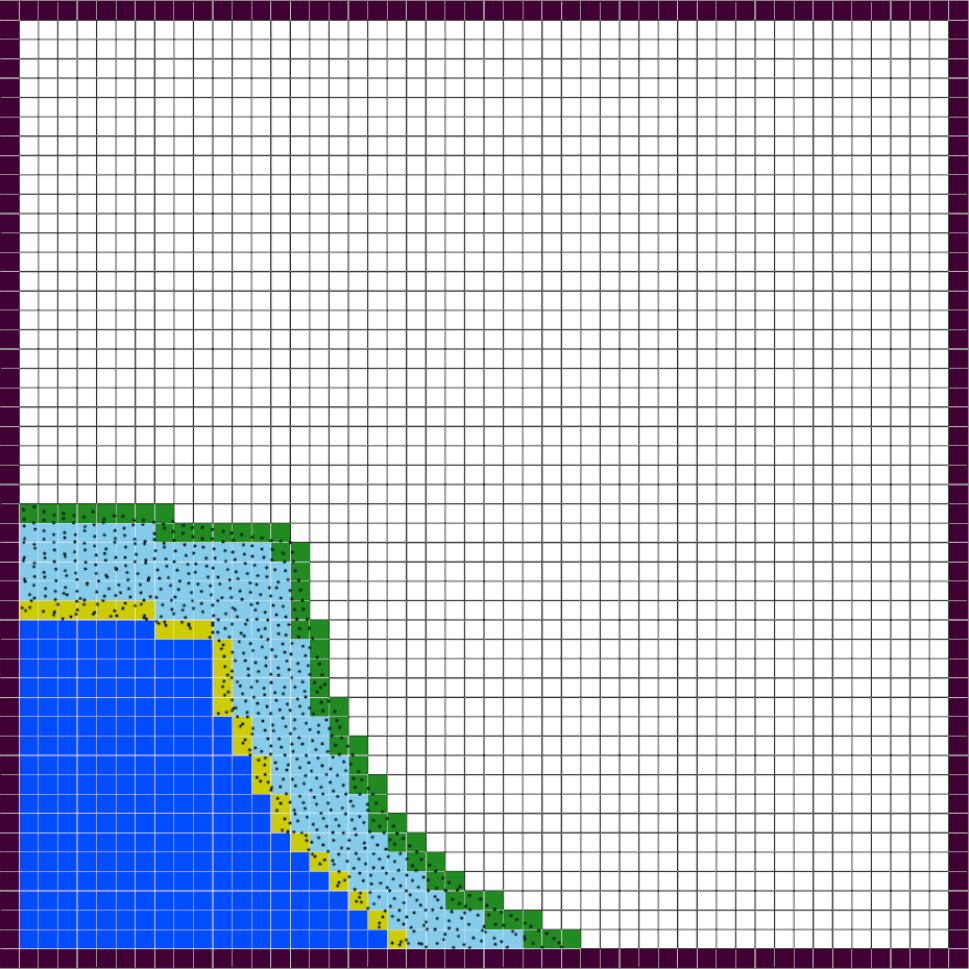}
    \caption{
    A band.
    Deep cells ($\mathcal{C}_{<-R}$) in dark blue, band interface ($\mathcal{C}_{-R}$) in yellow, surface ($\mathcal{C}_0$) in green, and the rest of the band ($\mathcal{C}_{-R < \beta < 0}$) in light blue.
    }
    \label{fig.band}
\end{figure}
Solving the LP in \cref{eq.problem3} does not scale well, and for large 3D grids, we propose an incompressible variant of the band method \cite{ferstl16band}.
The method uses only a fraction of the number of particles, which directly affects the size of the LP.

The motivation for the band method is based on the observation that most of the interesting, complex behavior of a fluid happens close to the surface.
FLIP uses particles to reduce numerical dissipation and keep the simulation lively.
Based on the observation above, particles at deeper levels of the fluid do not contribute much to visual appearance.
Leveraging that, the method maintains only a narrow band of particles near the fluid surface and uses an Eulerian-grid approach to simulate the rest of the fluid.
The grid velocity at each iteration is determined by a combination of the two.

To maintain fluid density for incompressibility, we need to supervise the number of particles that enter and leave the band.
Furthermore, while the method in \cite{ferstl16band} uses an approximate distance from the surface to define the band, our discrete approach that uses hard constraints requires a more careful estimate.

\begin{definition}[depth] \label{def.depth}
Each fluid cell is assigned a depth $\beta \in \Z$ that represents its discrete (signed) distance from the surface, and it is derived from the state of the fluid (particle positions) at a time step.
The depth is assigned recursively in a breadth-first-search manner.
Surface cells are assigned depth $\beta=0$.
Their neighboring fluid cells in a von Neumann neighborhood are assigned $\beta=-1$.
The unassigned fluid neighbors of the $\beta=-1$ cells are assigned one level lower, $\beta=-2$, and so on until all fluid cells are assigned a depth.
All non-fluid cell are assigned an arbitrary positive number (e.g., 1) as depth.
\end{definition}

Let $R + 1 \in \mathbb{N}$ be the thickness of the particle band.
We define the cells at depth $-R \le \beta \le 0$ to be within the band.
We call cells at depth $\beta = -R$ \textit{band interface} and cells at depth $\beta < -R$ \textit{deep}.
We add marking subsets to distinguish between parts and depth levels of the fluid.
Denote by $\mathcal{C}_{\beta=k}$ or simply $\mathcal{C}_k$ (when clear from the context) the set of cells at depth $\beta = k$. We extend the notation to a range of depth levels, e.g., $\mathcal{C}_{<-R}$ will denote deep cells.
See \cref{fig.band} for illustration.

\begin{algorithm}[t]
    \DontPrintSemicolon

    Transfer velocity from particles to grid \textbf{and combine with current grid velocity} \tcp*[h]{ $\bar{x}$ } \;
    Apply external forces to grid \;
    Solve for pressure \;
    Transfer velocity from grid to particles \;
    \textbf{Advect grid velocity} \;
    Advect particles \tcp*[h]{ $\hat{x}$ } \;
    Correct particle positions \tcp*[h]{ $x$ } \;
    \textbf{Update cell markings} \;
    \textbf{Remove particles that reached the deep and add excess particles to the band interface} \;

    \caption{A time step of the band method} \label{alg.band}
\end{algorithm}
\Cref{alg.band} outlines the steps of the band method. Changes from \cref{alg.flip} are emphasized.
When transferring velocity from particles to the grid, the particles' velocity is copied only for cells within the band, not including the band interface.
The velocity in the rest of the grid cells remain unchanged.
When correcting the particles position, we modify our algorithm to handle the band (next sections).
Advecting grid velocity, which is needed for the part of the fluid without particles (not in the band), is done using the common semi-Lagrangian approach \cite{stam99}.
Cell markings $\mathcal{C}$ are updated, and the step is emphasized in \cref{alg.band} to clarify that it is performed at the end of the correction step and before removing and adding particles.

Particles are limited to the band, and particles which go deep are deleted.
To maintain incompressibility, the excess of particles in the deep is moved into the band interface, as described next.
We keep track in a variable $n_{\text{deep}}$ of the number of (imaginary) particles that are in the deep, updating the variable every deletion and insertion of a particle.
The excess of particles in the deep is
\[ n_{\text{excess}} \coloneq n_{\text{deep}} - \mu \abs{ \mathcal{C}_{<-R} } \ . \]
When $n_{\text{excess}} > 0$, we add $n_{\text{excess}}$ particles to the band interface. We randomly iterate the cells in the band interface and fill them up to $\mu$ with remaining excess particles.
Each added particle is positioned randomly within a cell, and its velocity is interpolated from the grid velocity.
Note that there is always space in the band interface for excess particles from the deep since we constrain the number of movements into and out of the band interface (next sections).

In the next sections, we offer three variants to control the movements into and out of the band interface, where each is faster than the former.
Foundation and concepts are laid out through the sections, culminating in the fastest variant.

\subsection{A Band Constraint} \label{sec.band_con}
We maintain incompressibility by controlling the comings and goings of particles through the band interface. We want the number of particles that move from a shallower depth level ($\beta = 1 - R$) to the band interface ($\beta = -R$) to be equal to the number of particles that move in the other direction.

We define two sets of index pairs of particle movement possibilities, into and out of the band interface (from and into a shallower level):
\begin{align*}
\tilde{\gamma}_\text{in} & \coloneq \left\{ (i,j) \mid \cell( \bar{x}_j ) \in \mathcal{\bar{C}}_{1-R} \ , \  \cell( \xi_{ij} ) \in \mathcal{\bar{C}}_{-R} \right\} \\
\tilde{\gamma}_\text{out} & \coloneq \left\{ (i,j) \mid \cell( \bar{x}_j ) \in \mathcal{\bar{C}}_{-R} \ , \  \cell( \xi_{ij} ) \in \mathcal{\bar{C}}_{1-R} \right\} \ .
\end{align*}

The deep may contain air bubbles from cells that carried bubbles while moving to the deep.
We would like to allow bubbles in the band interface and deep to fill up.
Let $\alpha_{-R}$, $\alpha_{<-R}$  be the total amounts of air bubbles (number of missing particles) in the band interface and deep, which can be calculated from the number of cells and particles:
\begin{align*}
\alpha_{-R} & \coloneq \mu \abs{ \mathcal{\bar{C}}_{-R} } - \sum_{ c \in \mathcal{\bar{C}}_{-R} } \abs{ \bar{\gamma}_c } \\
\alpha_{<-R} & \coloneq \mu \abs{ \mathcal{\bar{C}}_{<-R} } - n_{\text{deep}} \ .
\end{align*}
We express the conditions above as an additional constraint to \cref{eq.problem3}:
\begin{equation} \label{eq.band_con}
0 \le \sum_{ (i, j) \in \tilde{\gamma}_\text{in} } b_{ij} - \sum_{ (i, j) \in \tilde{\gamma}_\text{out} } b_{ij} \le \alpha_{\le -R} \ .
\end{equation}
Where $\alpha_{\le -R} \coloneq \alpha_{-R} + \alpha_{<-R}$.
We also update \cref{eq.problem3.d} and \cref{eq.problem3.e} to use the band markings:
\begin{subequations} \label{eq.band_con_update}
\begin{align}
\sum_{ (i, j) \in \tilde{\gamma}_c } b_{ij} & \le \mu \ , \quad \forall c \in \bar{\mathcal{C}}_\text{empty} \cup \bar{\mathcal{C}}_\text{surface} \cup \bar{\mathcal{C}}_{-R} \label{eq.band_con_update.a} \\
\abs{\bar{\gamma}_c} \le \sum_{ (i, j) \in \tilde{\gamma}_c } b_{ij} & \le \mu \ , \quad \forall c \in \bar{\mathcal{C}}_{-R < \beta < 0} \ ,
\end{align}
\end{subequations}
where we allow the band interface the same flexibility as the surface (to lose particles) since excess deep particles will be added back to it.
We put no constraint on deep cells due to the particle deletion step.

The correction step is done as before by solving the updated problem in \cref{eq.problem3} for $b$ and updating $x$ accordingly.
Unlike the local, sparse constraints in \cref{eq.problem3}, the band constraint is global and dense since it encompasses and ties together particle movements along the band interface.
Moreover, the system matrix may not be totally unimodular anymore, and the ILP problem cannot be relaxed.
In some scenes, these increase the solver time such that it is not much better than not using a band ($R = \infty$).
In the next sections, we offer faster alternatives.

\subsection{A One-Way Band Constraint} \label{sec.band_con2}
One way to shorten the constraint in \cref{eq.band_con} is to determine first the number of particles that go into and out of the band, and based on that, constrain only the number of particles in the direction with the greater flow.
We do this in two steps, solving an LP in the first step and an ILP in the second.

First, we solve \cref{eq.problem3} as is (without an additional band constraint), getting an optimal solution $b^*$, and we do not update $x$ yet. From these particle movements, denote the number of particles that go into and out of the band interface by
\begin{align*}
n^*_\text{in} & \coloneq \sum_{ (i, j) \in \tilde{\gamma}_\text{in} } b^*_{ij} \\
n^*_\text{out} & \coloneq \sum_{ (i, j) \in \tilde{\gamma}_\text{out} } b^*_{ij} \ .
\end{align*}
Consider the differences
\begin{align*}
s_\text{in} & \coloneq n^*_\text{out} - n^*_\text{in} + \alpha_{\le -R} \\
s_\text{out} & \coloneq n^*_\text{in} - n^*_\text{out} \ .
\end{align*}
$s_\text{in}$ measures how much space is left in the band interface and deep, and $s_\text{out}$ measures the space in the rest of the band.
If both $s_\text{in} \ge 0$ and $s_\text{out} \ge 0$, then the movements are fine, we can update $x$ according to $b^*$ and proceed with the rest of the algorithm.
Else, there is negative space (incompressibility is violated), and we solve \cref{eq.problem3} a second time with an additional constraint, depending on which space is negative.

If $s_\text{in} < 0$, then too many particles moved into the band interface, and we need to limit them.
We fix the movements of all $n^*_\text{out}$ particles that moved from the band interface to the rest of the band and block the rest of the movements in $\tilde{\gamma}_\text{out}$.
In addition, we prevent movement into the band from particles in $\bar{\mathcal{C}}_{1 - R}$ that do not move into the band interface in $b^*$.
We end up with a constraint that selects $n^*_\text{out} + \alpha_{\le -R}$ particles from the particles that moved into the band in $b^*$:
\begin{subequations} \label{eq.band_con2}
\begin{align}
b_{ij} &= b^*_{ij} \ , \quad && \forall (i,j) \in \tilde{\gamma}_\text{out}  \label{eq.band_con2.a} \\
b_{ij} &= 0 \ , \quad && \forall (i,j) \in \tilde{\gamma}_\text{in} \ , \ b^*_{ij} = 0 \label{eq.band_con2.b} \\
\sum_{ (i, j) \in \tilde{\gamma}_\text{in} } b_{ij} &= n^*_\text{out} + \alpha_{\le -R} \ . \label{eq.band_con2.c}
\end{align}
\end{subequations}
When setting a movement $b_{ij}$ of the $j$th particle in \cref{eq.band_con2.a} to one, due to \cref{eq.problem3.c}, we can also set the rest of the particle's movements to zero: $\forall k \ne i, b_{kj} = 0$.
Note that \cref{eq.band_con2.a} and \cref{eq.band_con2.b} merely eliminate variables from the system, which leaves a single constraint \cref{eq.band_con2.c} that sets the number of particles that enter the band interface.
Due to $b^*$, which moves more particles than required, we know that the problem in \cref{eq.problem3} with the additional constraint \cref{eq.band_con2} is feasible.

Else, $s_\text{out} < 0$, and we need to limit the number of particles that move out of the band. Similar to \cref{eq.band_con2}, this is expressed as
\begin{subequations} \label{eq.band_con3}
\begin{align}
b_{ij} &= b^*_{ij} \ , \quad && \forall (i,j) \in \tilde{\gamma}_\text{in} \label{eq.band_con3.a} \\
b_{ij} &= 0 \ , \quad && \forall (i,j) \in \tilde{\gamma}_\text{out} \ , \ b^*_{ij} = 0 \label{eq.band_con3.b} \\
\sum_{ (i, j) \in \tilde{\gamma}_\text{out} } b_{ij} &= n^*_\text{in} \ , \label{eq.band_con3.c}
\end{align}
\end{subequations}
where \cref{eq.band_con3.a} fixes the variables of movements into the band interface, \cref{eq.band_con3.b} prevents movements out of the band interface that do not occur in $b^*$, and \cref{eq.band_con3.c} sets the number of particles that leave the band interface

To summarize, in the first step we solve \cref{eq.problem3}.
If needed, we perform a second step, where we solve \cref{eq.problem3} again using the one-way band constraints in \cref{eq.band_con2} or \cref{eq.band_con3}.
After the steps, we update $x$ and proceed with the rest of the algorithm.
See \cref{fig.path} for an example.

The one-way band constraint is still dense, and the ILP still cannot be relaxed. However, since we reduce variables and simplify the band problem, it becomes significantly faster to solve than adding the full band constraint in \cref{eq.band_con}.

\subsection{Flow Along Paths} \label{sec.push_part}
\begin{figure*}
    \centering
    \includegraphics[width=.8\textwidth]{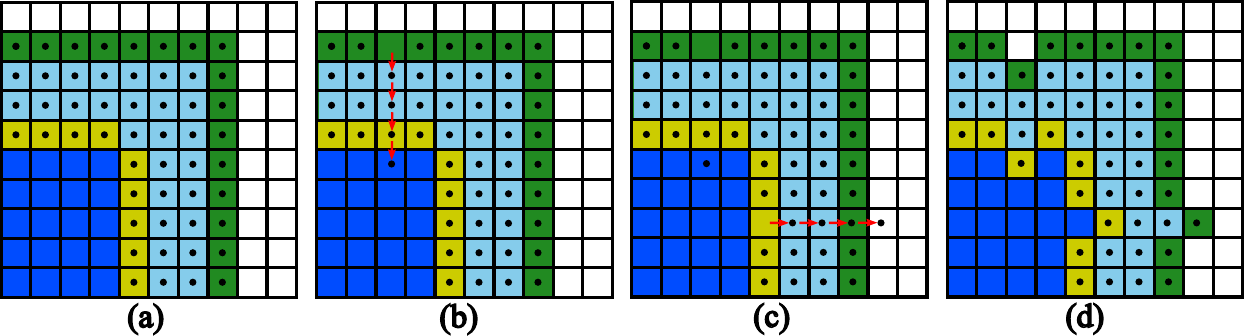}
    \caption{
    An example of correcting the band (1 ppc).
    (a) The beginning of the iteration; see \cref{fig.band} for the color code of the cells.
    (b) First step, solving the LP in \cref{eq.problem3}.
    Four particles move into the cell below them; red arrows indicate former cells.
    One of the particles moved into the band interface:
    $n^*_\text{in} = 1$, $n^*_\text{out} = 0$, $s_\text{in} = -1$, $s_\text{out} = 1$.
    Since $s_\text{in} < 0$, incompressibility is violated, and we need to perform a second step to correct it.
    If we use the variant in \cref{sec.band_con2}, then we solve \cref{eq.problem3} again---based on the particle positions in (a)---fixing the movements in \crefrange{eq.band_con2.a}{eq.band_con2.b} and adding the constraint in $\cref{eq.band_con2.c}$ to set the number of particles that move into the band (zero). There will be no movement of particles between cells (only within the cells) compared to (a); such movement will occur only when particles also leave the band interface in the first step.
    (c) If we use the variant of the band method in \cref{sec.push_part}, then a path from the band interface to the surface is found.
    Reverting the vertical path in (b) is always an option. Instead, the indicated horizontal path is selected, and particles are pushed along it.
    (d) Cell markings are updated. In this case, there is no need to remove particles that reached the deep or to fill the band interface with excess particles from the deep (\cref{alg.band}).
    }
    \label{fig.path}
\end{figure*}
The second step of the one-way band constraint approach can be viewed as correcting the incompressibility in the band interface and the deep after the first step.
We suggest a cheaper way to perform the correction, which does not require solving an ILP.

We perform the same first step as in \cref{sec.band_con2} and solve \cref{eq.problem3} as is (an LP without additional band constraints), this time updating the particles positions $x$ according to $b^*$.
If $s_\text{in}$ or $s_\text{out}$ is negative, then too many particles flowed into or out of the band interface. To correct that, we move some of them along grid paths in the required direction.

If $s_\text{in}$ is negative, then we need to move $n_\text{move} \coloneq -s_\text{in}$ particles out of the band interface.
Else, if $s_\text{out}$ is negative, then we need to move $n_\text{move} \coloneq -s_\text{out}$ particles into the band interface.
Otherwise, correction is not necessary.

We limit the $j$th particle's movement to a single cell (in a von Neumann neighborhood) relative to its position in the last iteration ($\bar{x}_j$).
To maintain the incompressibility constraint, a particle can move into cell $c$ only if it has space ($\abs{\gamma_c} < \mu$, where $\gamma$ reflects the state of the updated $x$). If it does not, then another particle needs to move out from $c$ beforehand.
This means that a chain of particles needs to be moved along a grid path, starting from a cell that has the flexibility to lose a particle---a surface, a band interface, or a former empty cell (see \cref{eq.band_con_update.a}).
We need to find $n_\text{move}$ such paths.

We formulate this as a minimum-cost flow problem (MCFP) in a graph.
The grid cells are designated as graph vertices, and possible particle movements are designated as graph edges with capacity one.
We will use multiple sources and sinks, denoted $\mathcal{C}_\text{source}$ and $\mathcal{C}_\text{sink}$.
The location of sources and sinks depend on the flow direction---into or out of the band interface.
In the in direction, surface cells are sources, and band interface and deep cells are sinks.
In the out direction, interface cells are sources, and the rest of the band (surface and inner cells with bubbles) and empty cells are sinks.

The cost of an edge that represents a possible movement of the $j$th particle  into a cell $c$ is the cost of its (optimal) position in $c$ minus the cost of its current position w.r.t. its ideal position:
\begin{equation} \label{eq.edge}
\sigma_\text{edge}(j, c) \coloneq \sigma_\text{obj}( \phi_\text{close}( \hat{x}_j, c ), \hat{x}_j ) - \sigma_\text{obj}( x_j, \hat{x}_j ) \ .
\end{equation}
We limit the movement of the $j$th particle to its cell in the previous iteration and the cells neighboring that cell (in a von Neumann neighborhood).

\begin{algorithm}[t]
    \DontPrintSemicolon
    \SetArgSty{}
    \KwOut{A list $P$ of Paths and an array $J$ of $\abs{\mathcal{C}}$ edges to parents}
    Let $b_{fin}$ be an array of $\abs{\mathcal{C}}$ flags, $\sigma$ be an array of $\abs{\mathcal{C}}$ costs, and $Q$ be a priority queue \;
    \lFor{ $c \in \mathcal{C}$ }{ $J[c], b_{fin}[c], \sigma[c] \leftarrow$ none, false, $\infty$ \nllabel{alg.find.init_arrays} }
    \For( \tcp*[h]{initialize $Q$} \nllabel{alg.find.init_q} ){ $c \in \mathcal{C}_\text{source}$ }{
        \lIf{ $\abs{\gamma_c} = 0$ }{ continue \tcp*[h]{no particles} }
        $a \leftarrow$ Node( cost=0, cell=$c$, edge=ROOT, root=$c$ ) \;
        $Q$.enqueue( $a$ ) \tcp*[h]{using $a$.cost as key} \nllabel{alg.find.enqueue} \;
        $\sigma[c] \leftarrow 0$ \;
    }
    \While( \tcp*[h]{main loop} ){ not $Q$.empty() } {
        $a \leftarrow$ $Q$.dequeue() \tcp*[h]{lowest cost} \;
        $c \leftarrow a$.cell \;
        \lIf( \tcp*[h]{visited} ){ $J[c] \notin \left\{ \text{none, ROOT} \right\}$ \nllabel{alg.find.visited} }{ continue }
        \lIf( \tcp*[h]{finished} ){ $b_{fin}[a.\text{root}]$ }{ continue }
        $J[c] \leftarrow a$.edge \tcp*[h]{assign an edge to parent} \nllabel{alg.find.set_parent} \;
        \If( \tcp*[h]{a sink with space} ){ $c \in \mathcal{C}_\text{sink}$ and $\abs{ \gamma_c } < \mu$ \nllabel{alg.find.sink} }{
            $b_{fin}[a.\text{root}]$ = true \;
            $P$.add( Path( edge=$a$.edge, sink=$c$ ) ) \;
            continue \;
        }
        \For( \tcp*[h]{without the 0 vector} ){ $d \in \mathcal{D} \setminus \{ 0 \}$ \nllabel{alg.find.neighbors} }{
            $c' \leftarrow \cell( \phi_\text{center}(c) + d )$ \;
            \lIf( \tcp*[h]{visited or a source} ){ $J[c'] \ne$ none }{ continue }
            $j \leftarrow$ best\_edge( $c$, $c'$ ) \;
            \lIf( \tcp*[h]{no edge} ){ $j =$ none }{ continue }
            $t \leftarrow$ $\sigma[c]$ + $\sigma_\text{edge}(j, c')$ \tcp*[h]{total cost from source} \;
            \lIf( \tcp*[h]{is not better} ){ $t \ge \sigma[c']$ }{ continue }
            $\sigma[c'] \leftarrow t$ \;
            $Q$.enqueue( cost=t, cell=$c'$, edge=j, root=$a.root$ ) ) \;
        }
    }
    \caption{Find paths} \label{alg.find}
\end{algorithm}
%
\begin{algorithm}[t]
    \DontPrintSemicolon
    \SetArgSty{}
    \KwIn{Two cells $c, c'$}
    \KwOut{An index $j$ of a particle that can move from $c$ to $c'$ with the lowest cost}
    $\sigma, j \leftarrow \infty$, none \;
    \For{ $j' \in \gamma_c$ }{
        \lIf( \tcp*[h]{farther than one cell from $\bar{x}_{j'}$} ){ $c \ne \cell( \bar{x}_{j'} )$ and $c' \ne \cell( \bar{x}_{j'} )$ \nllabel{alg.edge.too_far} }{ continue }
        \lIf{ $j =$ none or $\sigma_\text{edge}(j', c') < \sigma$ }{ $\sigma, j \leftarrow \sigma_\text{edge}(j', c'), j'$ }
    }
    \caption{best\_edge( $c$, $c'$ )} \label{alg.edge}
\end{algorithm}
To solve the MCFP, we use a variant of Dijkstra's algorithm to find $n_\text{move}$ (augmenting) paths in the (residual) graph (using terms from the Ford--Fulkerson algorithm).
The algorithm is listed in \cref{alg.find}.

A path starts in a source cell and ends in a sink. The algorithm finds paths that do not share cells (or particles).
We hold in three arrays (of size $\abs{\mathcal{C}}$), $J$, $b_{fin}$, and $\sigma$, data related to the cells.
$J[c]$ is the index of a particle that represents an edge to the parent of $c$ on a path, where $c$ can belong to a single path at most.
$b_{fin}[c]$ is a (boolean) flag that indicates if the path that started at the source cell $c$ is complete.
$\sigma[c]$ is the total cost of the path that $c$ belongs to from its source to $c$.
The three arrays are initialized with none, false, and $\infty$ (using multiple assignment in line \ref{alg.find.init_arrays}).

The search for paths starts at source cells (skipping empty ones), which are pushed into a priority queue $Q$ of objects of type \textit{Node} (line \ref{alg.find.enqueue}). A \textit{Node} represents the last vertex in a path, which will probe for the next cell on the path.
The fields of \textit{Node} are: \textit{cost}---the sum of edge costs along the node's path; \textit{cell}---the cell that the vertex represents; \textit{edge}---the index of a particle that represents an edge to the cell's parent; and \textit{root}---the path's root (a source cell).
A new \textit{Node} is created using a constructor function with named arguments, and it is added to $Q$, where the field \textit{cost} is used as a key to compare elements.
The special value ROOT is used to indicate a root node (no parent).

In the main loop, the \textit{Node} with the lowest cost is dequeued.
Line \ref{alg.find.visited} checks if the cell has already been visited, i.e., if it has been dequeued already and has been assigned a parent.
$Q$ can hold \textit{Nodes} of the same cell but different parents (different possible paths). Only the \textit{Node} with the lowest cost is processed, and the rest are ignored. This is an efficient alternative for the priority queue's decrease-key method for sparse graphs.
If a path that started at the node's root has already finished, then the node is ignored.
If it is the first time that the node is visited, then it is assigned an edge to the parent on that path or a ROOT tag (line \ref{alg.find.set_parent}).
If the node is a sink with space (line \ref{alg.find.sink}), then the path is complete, and it is added to a (returned) list of completed paths $P$.

The neighboring cells of a node that were not visited yet are explored (line \ref{alg.find.neighbors}).
The best edge to a neighboring cell $c'$ is determined, and if the path to it has a lower cost, then $c'$ is enqueued.
\Cref{alg.edge} lists the algorithm that finds the best edge from cell $c$ to $c'$.
The condition in line \ref{alg.edge.too_far} checks that $c'$ is not farther than one cell from the particle's cell in the previous iteration.
It means that either the particle is currently in the same cell as in the last iteration or it is going to move to that cell.

\begin{algorithm}[t]
    \DontPrintSemicolon
    \SetArgSty{}
    \KwIn{An array $J$ of $\abs{\mathcal{C}}$ edges to parents and a Path $r$}
    $j, c \leftarrow r.$edge, $r$.sink \;
    \While{ $j \ne$ ROOT }{
        $c' \leftarrow \cell(x_j)$ \tcp*[h]{parent cell} \;
        $x_j \leftarrow \phi_\text{close}( \hat{x}_j, c )$ \tcp*[h]{move particle} \;
        $j, c \leftarrow J[c'], c'$ \tcp*[h]{predecessor} \;
    }
    \caption{Update a path} \label{alg.update}
\end{algorithm}
The paths are updated one at the time using \cref{alg.update} until $n_\text{move}$ paths are successfully updated.

Since \cref{alg.find} finds only nonintersecting paths, it may need be called more than once (with the paths updated).
Paths can be found (in the residual graph after an update) as long as the maximum flow is not reached.
The maximum flow is at least $n_\text{move}$ since it is possible to revert the particle positions induced by $b^*$ back to $\bar{x}$.
However, since the edge costs may be negative and we use Dijkstra, the resulting flow from the algorithm may not have the lowest cost.
We decided not to use a more expensive algorithm that finds the optimal cost since $n_\text{move}$ is only a small percentage of $n$, and in our experiments the results of using Dijkstra did not vary much from the alternative methods suggested in the previous sections.

See \cref{fig.path} for an example.

\section{Coupling with Solids} \label{sec.soilds}
We address incorporating solids into our framework.
We illustrate the idea on a simple scene of an object (also referred to as obstacle) free falling into water.
Before the object hits the water, its motion is only affected by gravity.
After hitting the water, the drag and buoyancy forces come into play, which decelerate the object until it reaches terminal velocity.
The affect on the fluid is expressed in the boundary conditions of the pressure equation \cite[chapter 5]{bridson15fluid}.
The pressure in grid cells that are marked as solid is set to $p = 0$, and along the solid boundary we have
\begin{equation} \label{eq.solid_boundary}
u \cdot n = u_\text{solid} \cdot n \ ,
\end{equation}
where $n$ is the boundary normal, and $u_\text{solid}$ is the solid velocity.

We discretize an object using particles, but we will keep using the term particles only to refer to fluid particles.
We allow a (non-empty) grid cell to be occupied by either fluid or a solid, which determines its marking.
As we did with fluid particles, the time step size is limited to restrict the solid from moving past the neighboring cells (Moore neighborhood).
If the object is going to occupy new cells, then the correction step decides whether the object moves or stays in place.
We consider objects that are denser than the fluid, and the movement of the object is prioritized unless incompressibility or fluid speed are violated.

Let $\mathcal{C}_\text{new\_solid}$ be the set of cells that are not marked as solid and the object intends to move into.
We modify the objective in \cref{eq.problem3.a} to use a new objective function:
\begin{mini}
{ b } { \sum_{i = 1}^m \sum_{j=1}^n b_{ij} \sigma_\text{solid\_obj}( \xi_{ij}, \hat{x}_j ) \ , }{ \label{eq.solid_problem} }{}
\end{mini}
where
\begin{equation} \label{eq.solid_penalty}
\sigma_\text{solid\_obj}(q, r) \coloneq \begin{cases}
    \lambda_\text{penalty} & \text{if } \cell( q ) \in \mathcal{C}_\text{new\_solid} \\
    \sigma_\text{obj}(q, r) & \text{else}
    \end{cases} \ .
\end{equation}
$\lambda_\text{penalty}$ (=1000) is set to a large weight to penalize particle movements into (potentially) new solid cells.

Given a solution to the modified problem, the correction step lets the object move only if none of the new particle positions $x$ are in $\mathcal{C}_\text{new\_solid}$; else the object stays in place.

For the band method in \cref{sec.push_part}, we modify $\sigma_\text{edge}$ in \cref{eq.edge} to use $\sigma_\text{solid\_obj}$ instead of $\sigma_\text{obj}$.

The definition of the fluid surface (\cref{def.surface}) considers fluid cells that touch a moving obstacle as surface. This enables flexibility in the movement of an obstacle.

\subsection{Clearing the Bottom}
\begin{figure}
    \centering
    \includegraphics[width=.8\columnwidth]{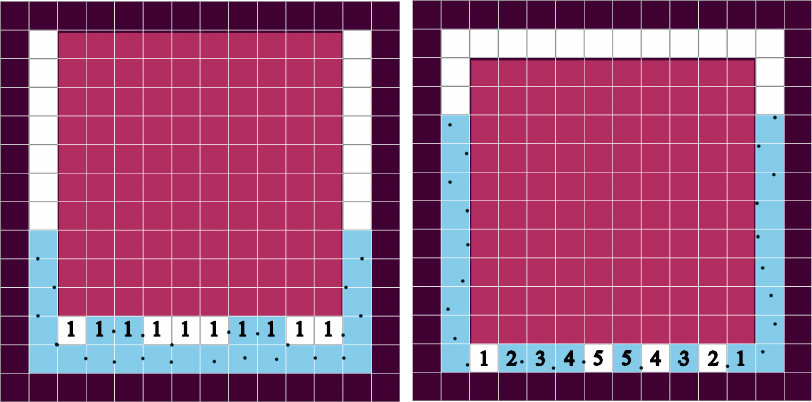}
    \caption{
    Clearing distance.
    Two frames of an obstacle (red square) falling into water (1 ppc).
    The clearing distance of cells in $\mathcal{C}_\text{new\_solid}$ is marked.
    (Left) all the cells in $\mathcal{C}_\text{new\_solid}$ have a non-obstacle neighbor below them, and their clearing distance is 1.
    (Right) Only the particle in the rightmost cell in $\mathcal{C}_\text{new\_solid}$ is guaranteed to clear the way using \cref{eq.solid_penalty}. Other particles in $\mathcal{C}_\text{new\_solid}$ require \cref{eq.solid_penalty2} to guarantee progress towards clearing the way.
    }
    \label{fig.clear_bottom}
\end{figure}
Consider particles that occupy potentially new obstacle cells and currently block the obstacle movement.
As long as there is a path in the fluid to a cell ($\notin \mathcal{C}_\text{new\_solid}$) with free space, an optimal solution will push them along the path to move them out of the obstacle's way and avoid the penalty in \cref{eq.solid_penalty}.

There is always such a path in the fluid (as long as there is space) except when the obstacle reaches the last layer of fluid before touching the bottom of the tank.
Consider such a row of cells in $\mathcal{C}_\text{new\_solid}$ between the obstacle and the bottom, where all the cells are empty except the middle one, which contains a particle. The particle has three possible cell movements: stay in the current cell, go left, or go right. Its contribution to the objective in \cref{eq.solid_problem} would be the same in each case, and nothing would motivate it to clear the way.
To address that, similar to \cref{def.depth}, we define

\begin{definition}[clearing distance] \label{def.clear_dist}
Each cell $\mathcal{C}_\text{new\_solid}$ is assigned a clearing distance that represents its discrete distance from a cell that is not in $\mathcal{C}_\text{new\_solid}$, where solid cells are ignored.
The clearing distance is assigned recursively in a breadth-first-search manner.
Cells in $\mathcal{C}_\text{new\_solid}$ with neighboring cells (in a von Neumann neighborhood) that are not in $\mathcal{C}_\text{new\_solid}$ are assigned 1.
Their unassigned neighbors are assigned one level higher, 2, and so on until all the cells in $\mathcal{C}_\text{new\_solid}$ are assigned a clearing distance.
\end{definition}

We modify \cref{eq.solid_penalty}:
\begin{equation} \label{eq.solid_penalty2}
\sigma_\text{solid\_obj}(q, r) \coloneq \begin{cases}
    \lambda_\text{penalty} \cdot \cdist( q ) & \text{if } \cell( q ) \in \mathcal{C}_\text{new\_solid} \\
    \sigma_\text{obj}(q, r) & \text{else}
    \end{cases} \ ,
\end{equation}
where $\cdist( \cdot )$ returns the clearing distance at $\cell( q )$.
This penalizes particles according to their clearing distance; see \cref{fig.clear_bottom} for illustration.

\section{Evaluation}
We implemented our method as a plugin in MantaFlow \cite{mantaflow}, using conjugate gradients to solve a Poisson equation.
We used \cite{gurobi}, selecting the dual simplex algorithm without presolve, to solve LP and ILP problems.
\\

\descrip{Measuring running time.}
The experiments were conducted on a laptop.
The running time of FLIP and IDP is dominated by solving a Poisson equation. FLIP solves one for pressure, and IDP solves an additional one for density.
The running time of our method is dominated by the solution of the LP problem.
\\

\descrip{Volume measure.}
We define the discrete volume measure of a cell $c$ based on its depth (\cref{def.depth}) as
\begin{equation} \label{eq.volume}
V_c \coloneqq \begin{cases}
    0 & c \in \mathcal{C}_{\beta > 0} \\
    \min( 1, \frac{ \norm{ \gamma_c } }{ \mu } ) & c \in \mathcal{C}_{-1 \le \beta \le 0} \\
    1 & \text{else}
    \end{cases} \ .
\end{equation}
Cells near the surface are given reasonable flexibility and are allowed to have less than $\mu$ particles.
Other fluid cells are penalized if they have less than $\mu$ particles.
All cells are penalized if they have more than $\mu$ particles.
Solid cells that contain particles are still considered pure solid, and their fluid volume is zero.

The measure used in \cite{kugelstadt19} is $\min( 1, \frac{ \norm{ \gamma_c } }{ \mu } )$ for any cell $c$. That measure is more favorable towards our method since it penalizes overflow only and overlooks air bubbles (volume inflation). According to that measure, our method preserves discrete volume perfectly.

When reporting results, we measure the volume of the whole fluid in a time step as $\frac{V}{V^*}$, where $V \coloneqq \sum_{c \in \mathcal{C}} V_c$ is the total fluid volume in a time step, and $V^*$ is how much volume should the fluid occupy. If there is no emitter in the scene, then $V^*$ is simply the initial fluid volume.
We report the range of volume percentages ($100 \frac{V}{V^*}$) over all the simulation iterations.
\\

We evaluated our method in several scenes that are described in \cref{sec.scenes}; see the accompanying video for their animation.
Statistics on volume preservation and running time are summarized in \cref{tab.vol} and \cref{tab.time}.
The grid sizes that were used in the figures and video are the ones in \cref{tab.vol}.
We compared our method with IDP \cite{kugelstadt19}, FLIP, the narrow band FLIP \cite{ferstl16band}, and Power PIC \cite{qu22}.

Power PIC has several parameters that can be crucial for its behavior, the accuracy of its particle distribution, and volume preservation.
We scaled the resolution of the transportation grid by 2 in each dimension (i.e., $\times 4$ finer than the simulation grid in 2D).
We set $\epsilon = 0.1$, $\eta = 1$, $\tau = \frac{1}{2 \sqrt[d]{\mu}}$ (e.g., $\frac{1}{4}$ for 4 ppc in 2D), and $\delta = 0.1$.
We did not cut off small coefficients from the Gaussian kernel $K$ since it increased the number of iterations due to lower accuracy.
Besides increasing the running time, large scaling of the transportation grid resulted in cracks and holes that kept forming and mending in the fluid.
Larger values for $\epsilon$ and $\delta$ disrupted the uniform particle distribution. On the other hand, effects of using smaller values ranged from the fluid becoming sluggish, exhibiting extremely high energy dissipation, up to standing still.
To summarize, Power PIC can correct the fluid's volume and particle distribution but at the risk of introducing dissipation if the changes are aggressive.

\subsection{Scenes} \label{sec.scenes}
The default settings that were used in the scenes (unless specified otherwise):
\begin{itemize}
\item The initial density is 4 ppc in 2D and 8 ppc in 3D.

\item The band method is used for our method in 3D (the variant in \cref{sec.push_part}) and only for it. The band thickness is $R=3$.

\item The maximum fluid speed is bounded.
\end{itemize}

\begin{figure}
    \centering
    \includegraphics[width=1\columnwidth]{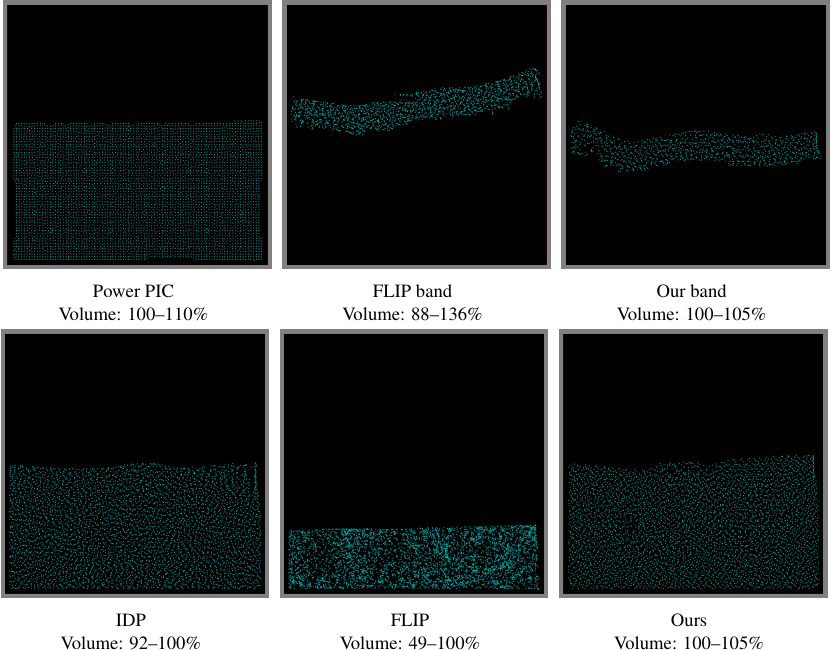}
    \caption{
    The last frame in a dam scene.
    }
    \label{fig.dam_emit_2d}
\end{figure}
\descrip{A breaking dam.}
In this scene, we perform an initial comparison of the methods behavior and volume preservation.
Some time after the dam breaks, an emitter spews a stream of water into the tank. After the emitter finishes, the total number of particles should fill exactly a half of the tank (domain); see \cref{fig.dam_emit_2d}.

\begin{figure}
    \centering
    \includegraphics[width=1\columnwidth]{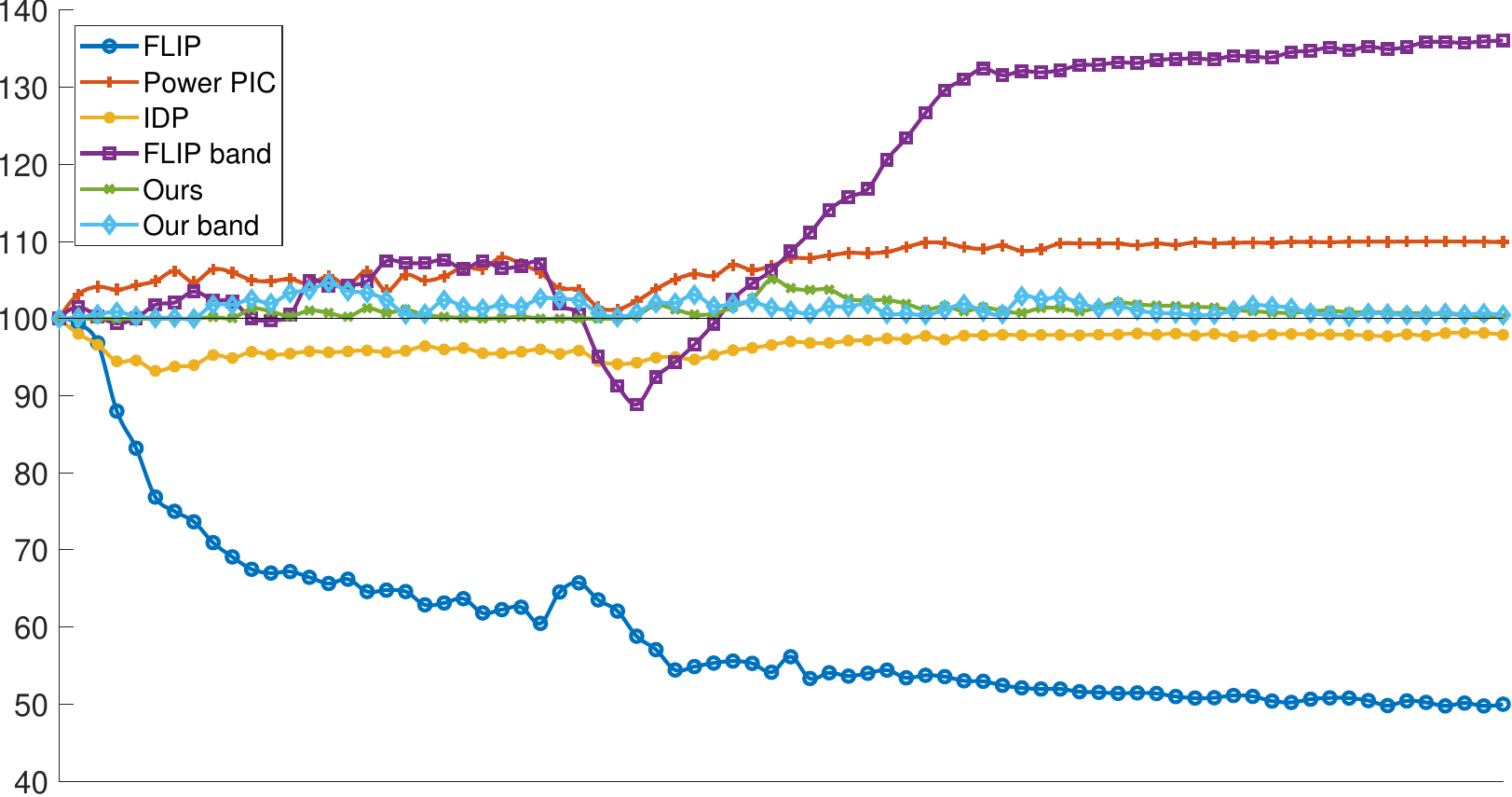}
    \caption{
    Volume (in percent) over time of the 2D dam scene.
    }
    \label{fig.plot_volume}
\end{figure}
The volume is plotted in \cref{fig.plot_volume}.
FLIP loses volume, and it is its general tendency.
The volume of the narrow band FLIP fluctuates.
Power PIC gives a nice distribution of particles and tends to preserve volume but not perfectly.
IDP tends to preserve volume but suffers some compression.
Our method uses constrained optimization and cannot lose volume. The volume may increase, however, due to air bubbles.
Our band method behaves similarly.
For both band methods we used thickness $R=6$ due to the more lively behavior of the particles compared to other scenes.

\begin{figure}
    \centering
    \includegraphics[width=1\columnwidth]{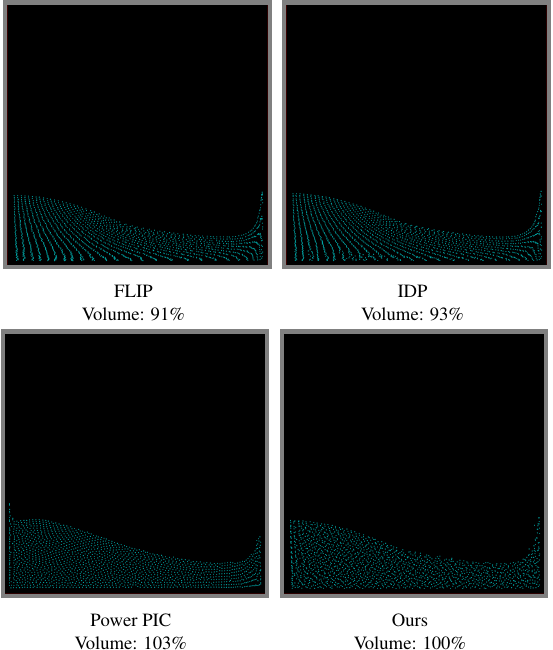}
    \caption{
    A frame from dam scene, when the water hits the right wall.
    Volume labels indicate the volume in this frame only.
    }
    \label{fig.dam_noise}
\end{figure}
\Cref{fig.dam_noise} shows a frame, where IDP keeps the clumped lines of particles and suffers volume loss.
Power PIC and our method distribute the particles, which adds noise to the fluid that reaches the surface.

\begin{figure}
    \centering
    \includegraphics[width=.8\columnwidth]{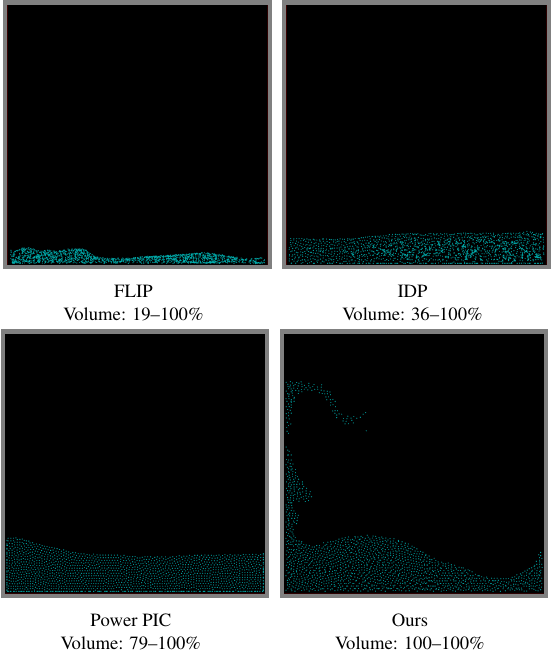}
    \caption{
    A dam scene with 1 ppc.
    }
    \label{fig.dam_2d_1ppc}
\end{figure}
\begin{figure}
    \centering
    \includegraphics[width=.8\columnwidth]{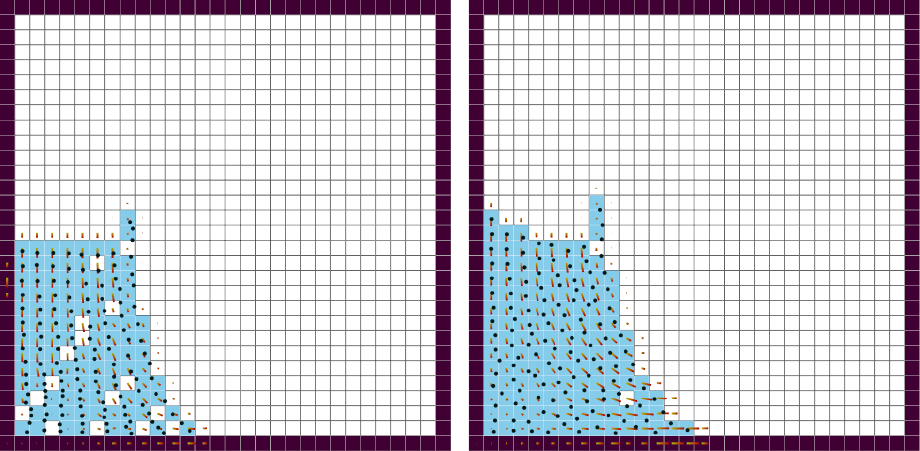}
    \caption{
    A dam scene on a $30 \times 30$ grid with 1 ppc.
    (Left) FLIP's fluid is riddled with holes that disrupt the velocity field. (Right) our fluid has less holes, and the constraint maintains the volume.
    }
    \label{fig.1ppc}
\end{figure}
\Cref{fig.dam_2d_1ppc} shows another 2D dam scene (without an emitter) using 1 ppc in a $\times 4$-finer grid (i.e., scaled by two in each dimension).
For FLIP and IDP, the fluid collapses, and there is a dramatic volume loss.
This is due to the sparse particle distribution (1 ppc), where particles can easily clump together, and some cells are missed.
As a result, the fluid is riddled with holes (see \cref{fig.1ppc}). The holes have zero pressure, and they disrupt the velocity field and attract particles.
Power PIC works hard to maintain a uniform distribution of the particles, and it requires significantly more Sinkhorn iterations for a time step. While its volume loss is less severe, the general behavior of the fluid is similar to FLIP.
Our method is the only one to maintain reasonable fluid volume and behavior, which is similar to the 4 ppc case, and the rare occurrences of holes in the fluid do not disrupt the velocity field.
Since there is only one particle at most in a grid cell, there can be no air bubbles, and the volume is perfectly preserved.

\begin{figure}
    \centering
    \includegraphics[width=1\columnwidth]{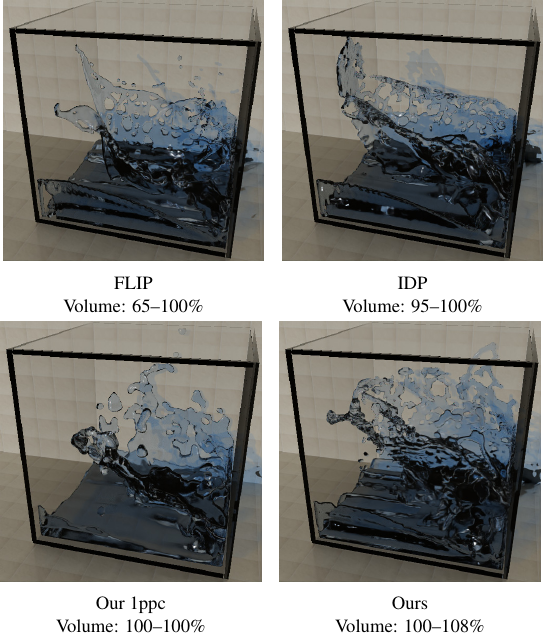}
    \caption{
    A dam scene.
    }
    \label{fig.dam_3d}
\end{figure}
\Cref{fig.dam_3d} shows a 3D dam scene.
FLIP loses a significant amount of volume.
IDP preserves the volume but suffers some compression.
IDP keeps the fluid smooth while ours introduces noise similar to the 2D case (\cref{fig.dam_noise}).
Using our method, the fluid hits the left wall the same time FLIP does. IDP overshoots the splash, which hits the wall earlier and more strongly.
The 1 ppc version of our method requires surface extraction of lower resolution, which is less detailed.
Our 8 ppc version gains some volume due to air bubbles.
\\

\begin{figure}
    \centering
    \includegraphics[width=1\columnwidth]{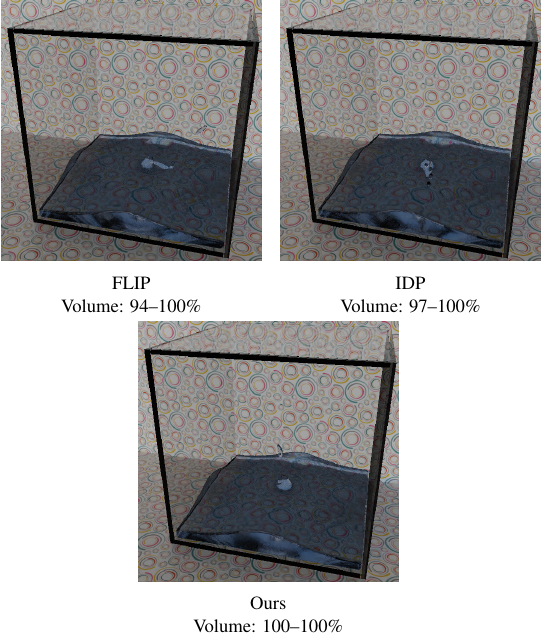}
    \caption{
    A water drop.
    }
    \label{fig.drop}
\end{figure}
\descrip{A drop of water.}
A drop of water is falling into a pool; see \cref{fig.drop}.
Notice where the splash goes. IDP’s throws the splash off the center while ours keeps it centered like FLIP.
\\

\begin{figure}
    \centering
    \includegraphics[width=.8\columnwidth]{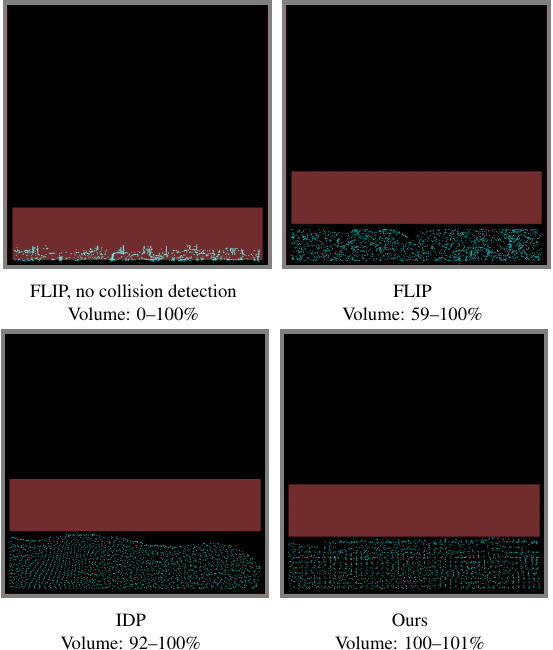}
    \caption{
    Compressing the fluid, the final frame.
    }
    \label{fig.compress}
\end{figure}
\descrip{Compressing the fluid.}
A heavy obstacle moves at a constant velocity towards the bottom of the tank. Its movement should supersede the fluid's unless fluid speed or incompressibility are compromised.
The obstacle's width is the same as the tank's, leaving no room for particles to pass it.
The expected result is the obstacle moving smoothly without overlapping any particles, compressing the fluid as much as it can; see \cref{fig.compress}.

We show two options for FLIP and IDP:
\begin{enumerate}
\item Moving the obstacle while disregarding the fluid. Since the fluid has no room to escape, there is an inevitable overlap with the obstacle, which leads to volume loss.

\item A naive collision detection, where the obstacle stops and waits until the fluid clears the cells that the obstacle is moving into.
IDP's volume correction disperses particles, which end up in the obstacle's way and obstruct its path more than FLIP.
Due to the jumpy behavior of the particles, IDP does not squeeze the fluid to the maximum possible, leaving some room for air.
FLIP, on the other hand, lets the obstacle compress the fluid too much, which leads to significant volume loss.
For both methods, the obstacle exhibits an undesired halting behavior.
\end{enumerate}
Out method achieves the desired behavior.
\\

\begin{figure}
    \centering
    \includegraphics[width=1\columnwidth]{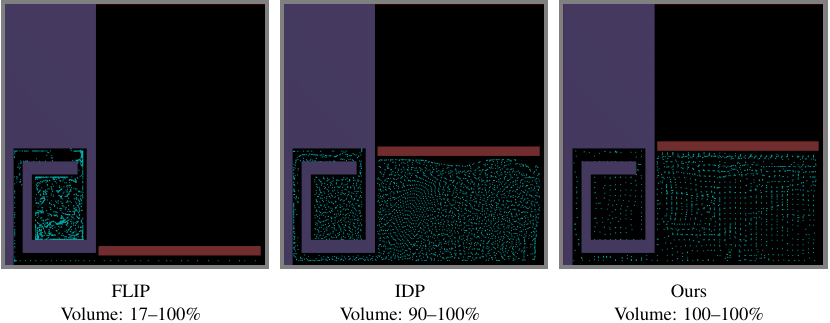}
    \caption{
    A spiral, the final frame.
    }
    \label{fig.spiral}
\end{figure}
\descrip{A spiral.}
The fluid is squeezed through a narrow spiral; see \cref{fig.spiral}.
Since the fluid’s speed is limited, so is the obstacle’s.
We used naive collision detection for FLIP and IDP.
FLIP lets the obstacle compress the fluid too much.
IDP allows the obstacle to lower more than it should before it can correct the fluid, consequently losing some volume that cannot be recovered.
In both methods, the obstacle's progress has more delays than necessary due to particles blocking the way.
Using our method, the obstacle progresses as fast as the fluid's speed limit allows, and the fluid is compressed as much as the volume restriction allows.
\\

\begin{figure}
    \centering
    \includegraphics[width=.8\columnwidth]{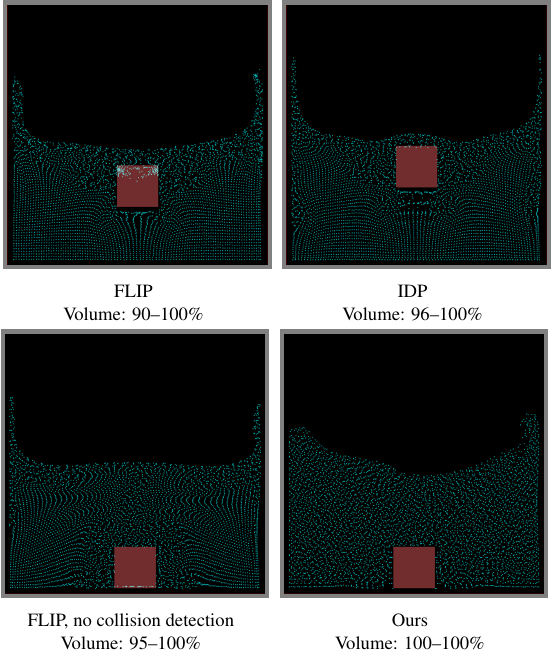}
    \caption{
    A falling obstacle, the final frame.
    }
    \label{fig.falling_obs_2d}
\end{figure}
\begin{figure}
    \centering
    \includegraphics[width=1\columnwidth]{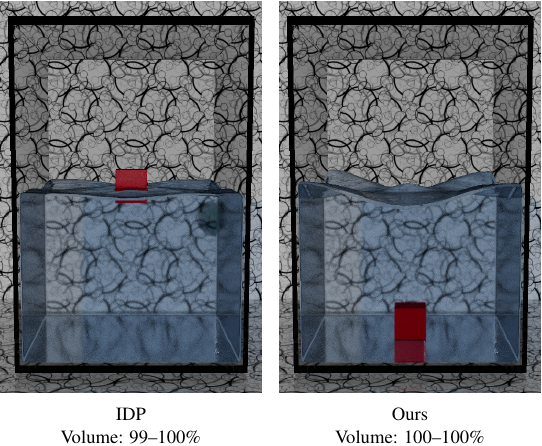}
    \caption{
    A falling obstacle, the final frame. IDP with naive collision detection.
    }
    \label{fig.falling_obs_3d}
\end{figure}
\descrip{A falling obstacle.}
An obstacle is falling into the water.
FLIP and IDP behave similarly:
\begin{itemize}
\item Without collision detection, some particles are trapped at the bottom, which leads to volume loss.
\item With collision detection, the obstacle movement is halted not long after hitting the water, far from the bottom of the tank, and there is no progress.
\end{itemize}
Using our method, the obstacle moves smoothly (like FLIP without collision detection) and there is no overlap with particles (which causes volume loss).
\Cref{fig.falling_obs_2d} shows the 2D case, and \cref{fig.falling_obs_3d} shows the 3D case.

\begin{figure}
    \centering
    \includegraphics[width=.7\columnwidth]{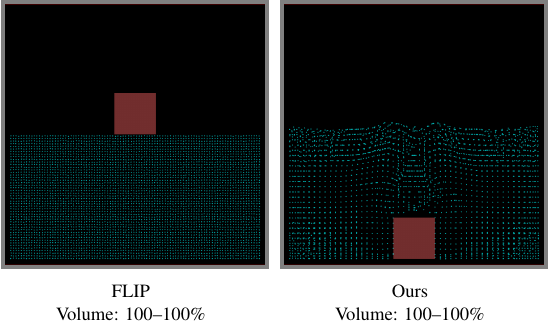}
    \caption{
    A falling obstacle with boundary conditions set to zero velocity, the final frame.
    }
    \label{fig.falling_obs_2d_zero_BC}
\end{figure}
Our method can move the obstacle through the fluid even without the obstacle exerting any forces on the fluid.
\Cref{fig.falling_obs_2d_zero_BC} shows an experiment where the boundary conditions for the pressure equation along the obstacle's boundary, \cref{eq.solid_boundary}, are set to zero velocity.
There is nothing to repel the fluid from the obstacle’s way.
As expected, when FLIP uses collision detection, the obstacle cannot penetrate the fluid.
Using our method, the obstacle progresses smoothly through the fluid, where the correction method displaces particles out of the obstacle's way, requiring no other forces.

\begin{figure}
    \centering
    \includegraphics[width=1\columnwidth]{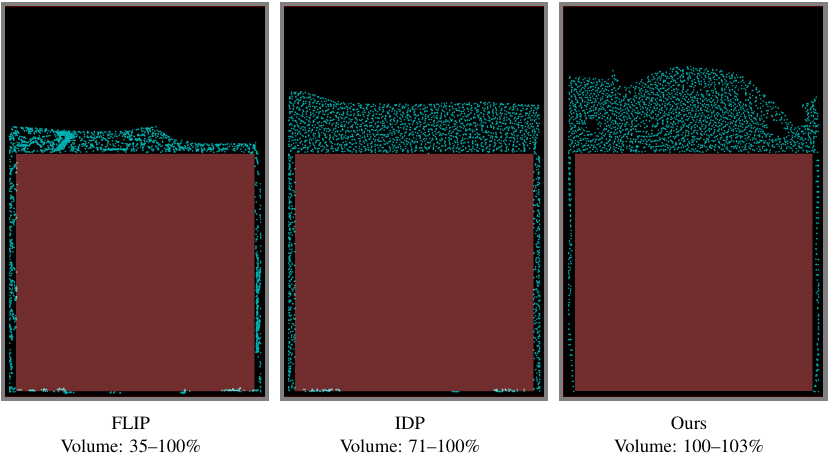}
    \caption{
    A large falling obstacle, the final frame.
    }
    \label{fig.large_falling_obs_2d}
\end{figure}
\Cref{fig.large_falling_obs_2d} and \cref{fig.large_falling_obs_3d} show another variation with a large obstacle falling into the water.
The obstacle's speed should depend on how fast the fluid can flow along the narrow paths between the obstacle and the tank.
Using collision detection for FLIP and IDP, the object makes no progress after hitting the surface. Without collision detection, particles on the bottom of the tank are trapped inside the obstacle, leading to significant volume loss.

\subsection{Discussion}
\begin{table}
    \centering
    \includegraphics[width=.8\columnwidth]{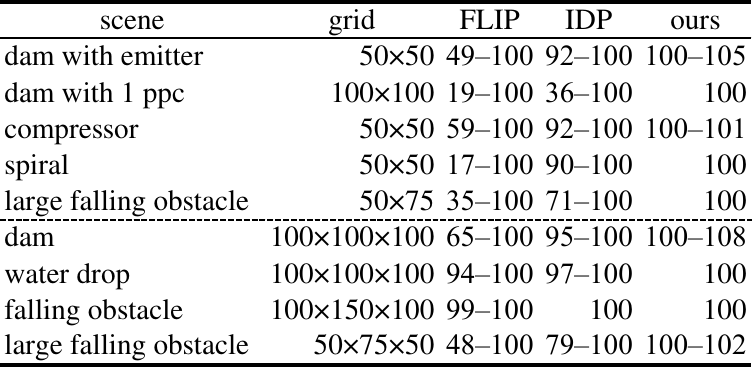}
    \caption{Volume preservation.
    A method's column shows the range of the fluid volume (presented as percentage of how much it should occupy) over all the simulation iterations, based on \cref{eq.volume}.
    }
    \label{tab.vol}
\end{table}
\begin{table}
    \centering
    \includegraphics[width=1\columnwidth]{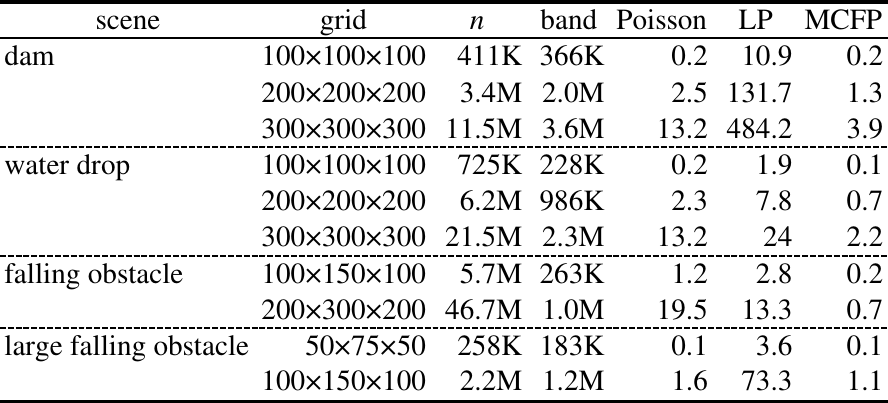}
    \caption{Running time.
    "$n$": number of particles in the scene.
    "band": the average number of particles in the band.
    "Poisson", "LP", and "MCFP": the average time it takes to solve a Poisson equation, the LP problem in \cref{eq.problem3}, and the MCFP problem in \cref{sec.push_part}.
    Average quantities are calculated over all iterations.
    Timings are given in seconds, rounded to one decimal place.
    }
    \label{tab.time}
\end{table}
\descrip{Methods.}
We focused the experiments on comparison with IDP, where its paper shows comparison with several other methods.
We did not use the band method for FLIP to keep the settings close to IDP, which does not support it. Also, both methods did not have performance issues that would require it.
\\

\descrip{Behavior and volume preservation.}
IDP allows the fluid to violate incompressibility and increase density. In its correction step, IDP moves particles to improve density accuracy.
The improvement is gradual, and the fluid may already be in a state that it cannot be recovered from.
We offered several scenarios to challenge this aspect, offering two reasonable solutions to address collision detection for FLIP and IDP: ignoring the particles and a naive detection approach. Even if the user manually selects the best of the two for each scenario, none of the behaviors are quite acceptable. The naive method caused a halting behavior or even a premature (complete) stop. Ignoring particles led to an inevitable overlap between the obstacle and the particles, which caused volume loss. Even if the loss was acceptable, the progress of the obstacle was smooth and arbitrary instead of being dependent on the fluid speed (e.g., the spiral scene).
In contrast, our method strictly enforces incompressibility.
A full correction is applied immediately, and the fluid cannot be compressed.
Furthermore, while the obstacle's movement is prioritized, its speed is still limited by how fast the fluid can clear the way.

Our correction is mostly done by blocking and preventing particles from moving into neighboring cells rather than push them around.
If a particle is moved to another cell, then the particle is positioned within the cell to be as close as possible to the location it was supposed to go.
This is also prioritized over a more uniform particle distribution, like in Power PIC, which looks nice in 2D but affects the fluid behavior.
IDP tends to preserve FLIP's behavior, e.g., \cref{fig.dam_noise}, where it keeps a smooth surface and clumped particles vs the other methods that add noise.
However, the correction movements can also have substantial influence on the fluid's behavior, e.g., in the 3D dam when the splash overshoots and hits the left wall (\cref{fig.dam_3d}), or in the water drop scene, where the splash of water is thrown off the center  (\cref{fig.drop}).
\\

\descrip{Running time.}
In 2D, performance was not an issue for any of the methods, and a time step took less than a second.
\Cref{tab.time} gives timings for 3D scenes with varying grid sizes.

The timing of solving a Poisson equation depends on the number of fluid cells.
For methods that preserve incompressibility, the number of fluid cells is approximately the number of particles divided by $\mu$.

\begin{figure}
    \centering
    \includegraphics[width=.4\columnwidth]{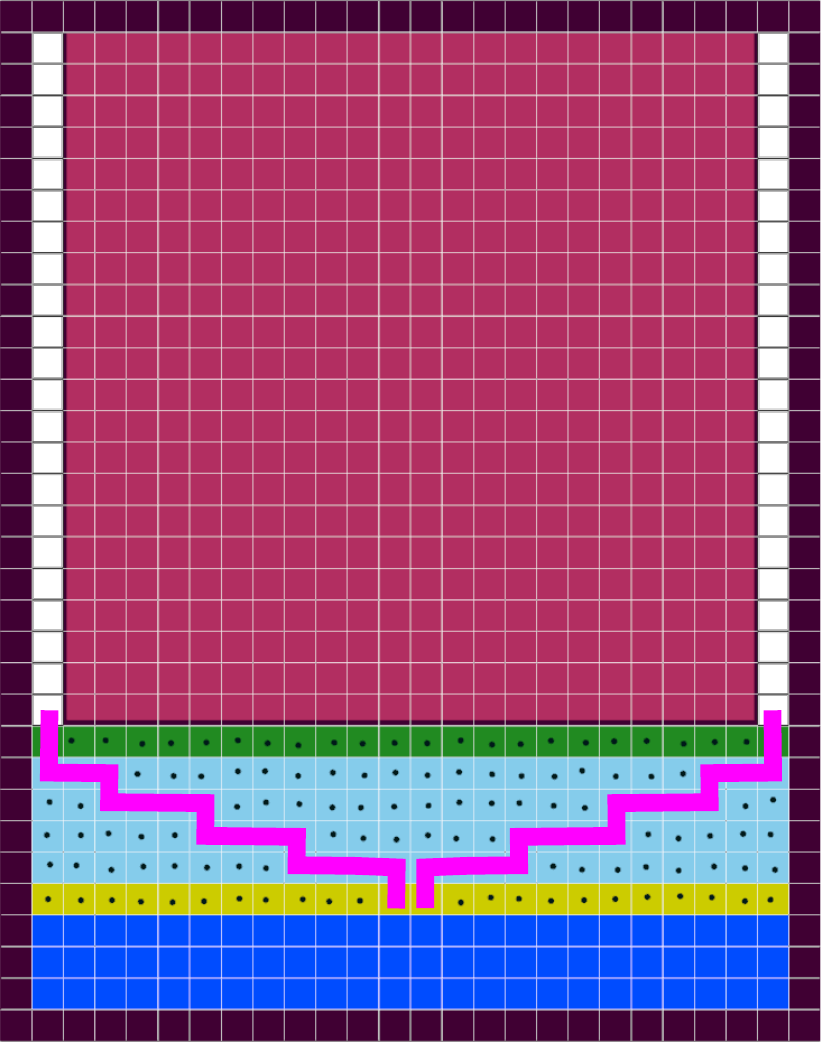}
    \caption{
    An extreme (hypothetical) case of paths (in magenta) that start in the middle of the band, creating a V shape and blocking other paths.
    }
    \label{fig.v_paths}
\end{figure}
MCFP's timing depends on the number of fluid cells.
In a typical scene, \cref{alg.find} needs to be executed rarely more than once in a time step if at all.
However, in a scene such as the large falling object, there are time steps where a few calls are needed.
For example, consider the extreme scenario in \cref{fig.v_paths}.
Two low-cost paths start in two incident cells in the middle of the band interface and lead to the surface along the narrow passages between the obstacle and the tank, creating a V shape. Since the algorithm finds non-overlapping paths, these two paths block all other paths, and another call to \cref{alg.find} is necessary if more paths are needed.
This increases the MCFP's average time for such a scene.

The time required to solve the LP can vary, depending on how hard the problem is, which does not necessarily depend on the number of particles and grid size.
For example, in the first iterations of the water drop scene, before the drop hits the pool, the LP solution is close to the ideal particle positions, and the LP is solved in 7 seconds for a $300^3$ grid.
On the other hand, in the large obstacle scene, the objective of the LP includes the obstacle, and the solution decides if the obstacle moves or not. This creates a dependency between cells in $\mathcal{C}_\text{new\_solid}$ that are fluid or incident to fluid cells, which is similar to a global dense constraint. A solution to the LP in this case can take a few minutes even for a $50 \times 75 \times 50$ grid.

An obvious advantage of using the band method is the reduction in the number of particles, e.g., from 46.7 million to an average of one million in the falling obstacle scene with the $200 \times 300 \times 200$ grid.
But another advantage of the band method is that it may accelerate the LP even if the number of particles is not reduced significantly. For example, in the $100^3$ dam scene, when not using the band method, the average LP time is 38.9 seconds. When using the band method, the number of particles is reduced only by 16\%, but the average LP time is reduced to 8.6 seconds. This is because the band method simplifies the problem. Intuitively, the constraints become local, where particles can simply follow gravity into the deep with no restriction (which is corrected by the MCFP in the second step).
This is not the case without the band method, where the flow reaches the bottom of the tank or abides the volume constraint, and it needs to go side-ways and up, which is a more global behavior.

Comparing the three methods, FLIP is the fastest since it requires only one Poisson solution, which the two other methods require as well.
IDP requires solving an additional Poisson equation while our method needs to solve an LP and an MCFP.
The cost of the additional Poisson solution is reasonable, and more importantly it is predictable since it depends on the number of fluid cells.
Our method mostly runs in a reasonable time on moderate size grids.
But in some cases, while solving the MCFP remains reasonably fast, solving the LP can take a few minutes, which leaves room for improvement.

\section{Conclusion}
We proposed a method that constrains particles to grid cells to enforce our definition of discrete incompressibility. While the fluid can still inflate with air bubbles, we show experimentally that the expansion is moderate.
Keeping strict incompressibility is one advantage over previous work, which instead gradually corrects the fluid over time. One issue with gradual correction is that volume preservation is not perfect, which may cause noticeable artifacts. A more sever issue is that the fluid can reach a state that is irrecoverable.

Our framework can be further exploited in other applications, and we show examples of coupling with solids which naive solutions applied to the state of the art fail to handle adequately.

The main drawback of the method is performance. In each iteration, an LP is solved. Besides the number of particles, the fluid configuration affects the running time, which may be longer than desired. We offered acceleration via an adapted version of the band method that enforces incompressibility, and we showed experimentally that it performs reasonably on moderate size grids.
The fastest variation of our band method solves an easier LP, followed by an additional correction that solves an MCFP. While the solution is not optimal, the result is reasonable for the affected amount of particles.
A future avenue could be to find faster alternatives to the LP.
\printbibliography

\appendix 

\section{Proofs} \label{sec.proofs}
\begin{figure}
    \centering
    \includegraphics[width=.8\columnwidth]{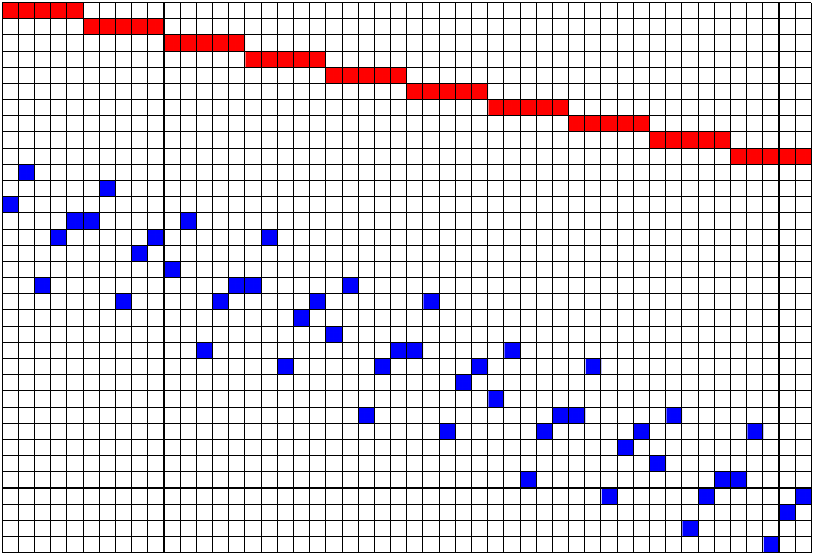}
    \caption{
    Visualizing a TU matrix.
    Zeros are in white, the rest are ones.
    The rows with the red cells correspond to \cref{eq.problem3.c}.
    The rows with the blue cells correspond to \crefrange{eq.problem3.d}{eq.problem3.f} (without duplicate rows).
    In both the red and blue set of rows, each column sums up to one.
    }
    \label{fig.draw_TU}
\end{figure}
\LPrelax*
\begin{proof}
Transform the LP in \cref{eq.problem3} into a canonical form $\max \left\{ cy \mid Ay \le d \right\}$, where $A$ is a matrix, and $c$, $y$, and $d$ are vectors:
\begin{itemize}
\item Change the objective to $\max$ by negating it.

\item Replace an equality constraint with two inequalities (bounding the LHS expression from both sides).

\item Change $\ge$ inequalities to $\le$ by negating them.

\item Convert the problem into a matrix form.
An expression that is bounded from both sides (which appears in two inequalities, e.g., \cref{eq.problem3.e} or a transformed equality constraint) appears as two identical rows in $A$ up to a sign.
\end{itemize}
The feasible region of an LP is a polyhedron (an intersection of hyperplanes).
Due to linearity, an optimal solution (an extreme point) is at a polyhedron vertex.
The polyhedron has vertices with integral coordinates if the matrix $A$ is totally unimodular (TU) and $d$ (the RHS) is integral \cite[theorem 19.1]{schrijver98theory}.
A matrix is TU if each of its subdeterminants is $\in \left\{ 0, \pm 1 \right\}$.
\begin{lemma}
$A$ is TU.
\end{lemma}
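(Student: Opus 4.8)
The plan is to read off the combinatorial structure of the constraint matrix and recognize it as the incidence matrix of a bipartite graph, which is a textbook totally unimodular (TU) matrix. Group the rows into two blocks: the \emph{particle} rows coming from \cref{eq.problem3.c} (one per particle $j$), and the \emph{cell} rows coming from \crefrange{eq.problem3.d}{eq.problem3.f} (one per constrained cell $c$). A variable $b_{ij}$ occurs in the particle block in exactly one row --- the row for its own particle $j$ --- and in the cell block in at most one row --- the row for $c=\cell(\xi_{ij})$, the unique cell it can land in (and in no cell row at all if that cell carries no constraint). Every coefficient is $+1$. Thus, before any sign flips, each column has exactly one $+1$ in the particle block and at most one $+1$ in the cell block; this is precisely the bipartite incidence pattern depicted in \cref{fig.draw_TU}, where each column sums to one within the red rows and within the blue rows.

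First I would establish TU for this ``core'' matrix $A_0$, whose rows are the \emph{distinct} constraints (each two-sided bound in \cref{eq.problem3.e} counted once, as its underlying $0/1$ incidence row). I would invoke the Ghouila--Houri criterion: $A_0$ is TU iff every subset of its rows can be split into two classes so that the class-wise signed column sums all lie in $\{-1,0,1\}$. Given any subset $S$ of rows, color the particle rows in $S$ one color and the cell rows in $S$ the other. Because each column carries at most one nonzero ($=+1$) in each block, its signed sum is (at most one $+1$) minus (at most one $+1$), hence lies in $\{-1,0,1\}$. This verifies the criterion in a single line, so $A_0$ is TU. Equivalently, $A_0$ is a bipartite incidence matrix, which is TU by a standard result \cite{schrijver98theory}.

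It then remains to pass from $A_0$ to the actual matrix $A$ of the canonical form $\max\{cy\mid Ay\le d\}$ built in the proof. The transformations producing $A$ only (i) duplicate a row up to sign --- an equality such as \cref{eq.problem3.c} and a two-sided bound such as \cref{eq.problem3.e} each split into a row already present in $A_0$ and its negation --- and (ii) append box-constraint rows $b_{ij}\le 1$ and $-b_{ij}\le 0$, which are unit rows $\pm e_{ij}$. Repeating a row, negating a row, and adjoining a unit row all preserve total unimodularity, so $A$ is TU as well; since the right-hand side $d$ is integral, the enclosing \cref{thm.LPrelax} follows from the integrality of TU polyhedra. I expect the only delicate point to be exactly this last bookkeeping: one should argue via TU-preserving operations rather than re-inspecting column supports in the enlarged matrix, where a two-sided cell would otherwise contribute three nonzeros to a column. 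This is precisely why the argument is cleanest when carried out on the distinct-row matrix $A_0$ of \cref{fig.draw_TU} and only afterward lifted to $A$.
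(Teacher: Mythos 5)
Your proof is correct, but it reaches total unimodularity by a genuinely different route than the paper. The paper proves the lemma directly by induction on the order of a square submatrix $B$ of the full canonical-form matrix $A$: it disposes of zero rows/columns, rows duplicated up to sign, and rows or columns with a single nonzero via Laplace expansion, and in the remaining case (every column of $B$ with exactly two nonzeros) it splits the rows into the particle block and the cell block, normalizes signs, and exhibits a $\pm\mathbb{1}$ vector in the null space of $B^\intercal$ to force $\det B = 0$. You instead isolate the distinct-row core $A_0$, observe that each column has at most one $+1$ among the particle rows (\cref{eq.problem3.c}) and at most one $+1$ among the cell rows (\crefrange{eq.problem3.d}{eq.problem3.f}) --- i.e., a bipartite incidence pattern --- certify TU in one line via Ghouila--Houri (or by citing the bipartite-incidence result), and then lift to $A$ by the standard TU-preserving operations of duplicating a row up to sign and adjoining unit rows. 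The two arguments hinge on the same structural fact (the red/blue row bipartition of \cref{fig.draw_TU}), but yours outsources the determinant bookkeeping to a cited characterization and to closure properties, which makes it shorter and less error-prone; the paper's is self-contained and needs no machinery beyond cofactor expansion. Your closing remark is also well taken: applying the bipartite argument directly to $A$ would fail because a two-sided constraint contributes a column with three or more nonzeros, so factoring through $A_0$ (or, as the paper does, absorbing the sign-duplicated rows into a separate induction case) is genuinely necessary rather than a stylistic choice.
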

Proof by induction on the size $k \times k$ of a square submatrix of $A$.

Base case: holds for $k=1$ since each entry of $A$ is $\in \left\{ 0, \pm 1 \right\}$.

Induction step: Assume the determinant of a $k \times k$ submatrix of $A$ is $\in \left\{ 0, \pm 1 \right\}$, prove for a submatrix $B \in \R^{(k+1) \times (k+1)}$.
Possible cases:
\begin{itemize}
\item $B$ has a row or column of zeros. Then, it is rank-deficient, and its determinant is zero.
Similarly if $B$ has a duplicate row up to a sign (e.g., two inequalities that bound the same expression, and both rows are in $B$).

\item $B$ has a row with a single nonzero $B_{ij}$ (e.g., \cref{eq.problem3.b}). Then, consider the Laplace expansion along this row. It will be equal to $B_{ij}$ times a $k \times k$ cofactor of $B$, which according to the assumption is $\in \left\{ 0, \pm 1 \right\}$.
Similarly if $B$ has a column with a single nonzero

\item Each variable corresponds to a particle movement, which ends up in a specific cell. Therefore, each variable appears only once in \crefrange{eq.problem3.d}{eq.problem3.f}.
Moreover, each variable appears once in \cref{eq.problem3.c}.
This leaves us with the last case where each column of $B$ has two nonzeros.
The nonzeros in each row are either all 1 or -1. Multiply each negative row by -1, which may only affect the sign of the determinant.
Divide $B$ into two matrices
\begin{equation*}
\begin{bmatrix}
B_1 \\
B_2
\end{bmatrix}
\ , \quad B_1 \in \R^{l \times (k+1)}
\ , \  B_2 \in \R^{(k-l+1) \times (k+1)}
\end{equation*}
such that a column in each matrix has a single 1; see \cref{fig.draw_TU} for illustration.
Let $v \in \R^k$ be the vector
\begin{equation*}
v \coloneq
\begin{bmatrix}
v_1 \\
v_2
\end{bmatrix}
\ , \quad v_1 \coloneq \mathbb{1} \in \R^l
\ , \  v_2 \coloneq -\mathbb{1} \in \R^{k-l+1}
\ ,
\end{equation*}
where $\mathbb{1}$ is a vector of ones.
$v$ is in the null space of $B^\intercal$, and thus $\det B = 0$.
\end{itemize}
\end{proof}

\end{document}